\documentclass[11pt]{article}
\usepackage{relate}
\usepackage{pgfplots}
\usepackage{pgfplotstable}
\pgfplotsset{compat=1.8}
\usepackage[aboveskip=8pt,belowskip=-4pt]{caption}
\usepackage{subcaption}
\usepackage{xspace} 
\usepackage{textcomp}
\usepackage{balance}  
\usepackage{xcolor}
\usepackage{amsmath, amsfonts}
\usepackage{algorithm}
\usepackage[margin=1in]{geometry}
\usepackage{algorithmicx}
\usepackage{float}
\usepackage{siunitx}
\usepackage{graphicx}
\usepackage{xspace}
\usepackage{hyperref}
\usepackage{thm-restate}
\usepackage{balance}  
\usepackage{booktabs} 

\usepackage{marginnote} 
\usepackage{url}
\usepackage{amssymb}
\usepackage{enumitem}
\graphicspath{{./fig/}}
\usepackage{amsthm}
\usepackage{mathtools}
\usepackage[noend]{algpseudocode}
\newtheorem{theorem}{Theorem}
\newtheorem{lemma}{Lemma}
\newtheorem{corollary}{Corollary}

\newtheorem{result}{Result}

\usepackage{comment}
\usepackage{color}
\usepackage{tikz}
\usetikzlibrary{fadings}


\input{defines}
\date{}

\begin{document}
\title{The Online Event-Detection Problem}
\author{
Michael A.~Bender\thanks{Department of Computer Science, 
Stony Brook University, Stony Brook, NY, 11794-2424 USA. Email:
\texttt{bender@cs.stonybrook.edu}.}
\and
Jonathan W. Berry\thanks{MS 1326, PO Box 5800, {Albuquerque}, NM, {87185} USA. Email: \texttt{\{jberry, caphill\}@sandia.gov}.}
\and 
Mart\'{\i}n Farach-Colton\thanks{Department of Computer Science, Rutgers University, Piscataway, NJ 08854 USA. Email: \texttt{farach@cs.rutgers.edu}.}
\and  
Rob Johnson\thanks{VMware Research, Creekside F
3425 Hillview Ave, {Palo Alto}, CA 94304 USA. Email: \texttt{robj@vmware.com}.}\vspace*{0.2pt}
\and Thomas M. Kroeger\thanks{MS 9011, PO Box 969, Livermore, CA 94551 USA. Email: \texttt{tmkroeg@sandia.gov}.}
\and
Prashant Pandey\thanks{Department of Computer Science, Carnegie Mellon University, 5000 Forbes Ave, Pittsburgh, PA 15213. Email: \texttt{ppandey2@cs.cmu.edu}.}
\and Cynthia A. Phillips\footnotemark[2]\vspace*{0.2pt} 
\and Shikha Singh\thanks{Department of Computer Science, Wellesley College, Wellesley, MA 02481-8203 USA.  Email: \texttt{shikha.singh@wellesley.edu}.} 
}

\maketitle
\sloppy

\begin{abstract}

	Given a stream $S = (s_1, s_2, \ldots, s_N)$, a \defn{$\phi$-heavy
	hitter} is an item $s_i$ that occurs at least $\phi N$ times in $S$.
	The problem of finding heavy-hitters has been extensively studied
	in the database literature.
%
	In this paper, we study a related problem.
	We say that there is a \defn{$\phi$-event at time~$t$} if $s_t$ occurs
	exactly $\lceil\phi N\rceil$ times in $(s_1, s_2, \ldots, s_{t})$.
	Thus, for each $\phi$-heavy hitter there is a single $\phi$-event, which
	occurs when its count reaches the \defn{reporting threshold}
	$T=\lceil\phi N\rceil$.
	We define the \defn{online event-detection problem} (\oedp) as: given $\phi$
	and a stream $S$, report all $\phi$-events as soon as they occur.

	Many real-world monitoring systems demand event detection where all events must
	be reported (no false negatives), in a timely manner, with no non-events reported
	(no false positives), and a low reporting threshold.  As a result,
	the \oedp requires a large amount of space ($\Omega(N)$ words)
	and
	 is not solvable in the streaming model or via standard sampling-based
	approaches.  
	
	Since \oedp requires large space, we focus on cache-efficient
	algorithms in the external-memory model.


	We provide algorithms for the \oedp that are within a log factor of optimal.
	Our algorithms are tunable: their parameters can be set to allow for
	bounded false-positives and a bounded delay in reporting.
	None of our relaxations allow false negatives since reporting all events
	is a strict requirement for our applications. 
	Finally, we show  
	improved results when the count of items in the input 
	stream follows a power-law distribution.

\end{abstract}

\section{Introduction}

Real-time monitoring of high-rate data streams, 
with the goal of detecting and preventing malicious events, is a critical component of defense systems for 
cybersecurity~\cite{Raza13,Sceller17,Yan09} and physical systems, such as water or power
distribution~\cite{Berry09,Kezunovic06,Litvinov06}.  In such a monitoring
system, changes of state are inferred from the stream elements.
Each detected/reported event triggers an intervention.
Analysts use more specialized tools to gauge the actual threat level.
Newer systems are even beginning to take defensive actions, such as
blocking a remote host, automatically based on detected
events~\cite{
MeinersPaNo10,GonzalezPaWe07}.  When
used in an automated system, accuracy (i.e., few false-positives and
no false-negatives) and timeliness of event detection are essential.

Motivated by these applications, we define and study the \defn{online event-detection problem} (\oedp).
Roughly speaking, the \oedp seeks to report all anomalous events (events that cross
a predetermined safety threshold) as soon as they occur in the input stream. The related problem of
finding the most frequent elements or heavy hitters in streams has been extensively
studied in the database literature~\cite{
CormodeMu05a, 
CormodeMu05b, 
AlonMaSz96,
CharikarChFa02, 
BravermanChIv16a, 
BravermanChIv16b, 
LarsenNeNg16,
BhattacharyyaDeWo16, 
DimitropoulosHuKi08, 
MankuMo02,BoseKrMo03,
DemaineOrMu02, 
BoseKrMo03,
BerindeIn10}.
More formally,
given a stream $S = (s_1, s_2, \ldots, s_N)$, a \defn{$\phi$-heavy
  hitter} is an element $s$ that occurs at least $\phi N$ times in
$S$.  
Here we focus on the
problem of finding $\phi$-events, where we say that there is
a \defn{$\phi$-event at time step~$t$} if 
$s_t$ occurs exactly $\lceil\phi N\rceil$ times in
$(s_1, s_2, \ldots, s_{t})$.  Thus for each $\phi$-heavy hitter there is a single
$\phi$-event which occurs when its count reaches the
\defn{reporting threshold} $T=\lceil\phi N\rceil$.

Formally, we define the \defn{online event-detection problem} (\oedp) as:
given stream $S = (s_1, s_2, \ldots, s_N)$, 
%
for each $i \in [1,N]$, 
report if there is a $\phi$-event at time $i$ before seeing $s_{i+1}$.
A solution to the online event-detection problem must report
\begin{enumerate}[noitemsep,nolistsep,label=(\alph*),itemindent=6pt,leftmargin=*]
\item\label{con1} all events\footnote{We drop the $\phi$ when it is understood.} (no \FNs) 
\item\label{con2} with no non-events  and no duplicates (no \FPs)
\item\label{con3} as soon as an element crosses the threshold (\OL).
\end{enumerate}
Furthermore, an online event detector must scale to
\begin{enumerate}[noitemsep,nolistsep,label=(\alph*),itemindent=6pt,leftmargin=*]
  \setcounter{enumi}{3}
\item small reporting thresholds $T$ and large $N$, i.e., very small $\phi$ (\Scalable).
\end{enumerate}

In this paper, we present algorithms for the \oedp.  We also give solutions
which relax conditions~(b) and~(c).  However, our solutions are motivated by
cybersecurity applications where (a) and (d) are strict requirements.
Next, we discuss how each of these conditions
relate to our approach and results. See~\secref{application} for more details
about the application that motivates the \oedp and its constraints.

\pparagraph{No false negatives}
We are motivated by monitoring systems for national security~\cite{FirehoseSite,AndersonPl15}.  
The events in this context have especially high
consequences so it is worth investing extra resources to detect them. 
We therefore do not allow false negatives (i.e., condition (a) is strict);
see~\secref{application} for more details. 
This rules out sampling-based approaches for the \oedp, which necessarily incur false negatives.

\pparagraph{Scalability}
Scalability (condition (d)) is essential in the broader context
of detecting anomalies in network streams, since anomalies are often small-sized events that develop slowly, appearing normal in the midst
of large amounts of legitimate traffic~\cite{mai2006sampled, Venkataraman05}.
As an example of the demands placed on event detection systems, the US
Department of Defense (DoD) and Sandia National Laboratories developed the
Firehose streaming benchmark suite~\cite{FirehoseSite,AndersonPl15} to measure
the performance of \oedp algorithms. In the FireHose benchmark, the reporting
threshold is preset to the representative value of $T=24$,
which translates to $\phi = 24/N = o(1)$ and thus a DoD benchmark enforces
condition (d).

\Scalable solutions to the \oedp require a large amount of space, ruling out the streaming
model~\cite{Babcock_etal_2002,Bar-Yossef_etal_2002}, where the available memory is small---usually just
$\polylog(N)$. In particular, streaming algorithms for the heavy-hitters problem assume $\phi > 1/\polylog(N)$ (all candidates 
must fit in memory).  
Even if some false positives are allowed,
as in the
$(\epsilon,\phi)$-heavy hitters problem\footnote{Given a stream of
  size $\Ns$ and $1/\Ns \leq \epsilon < \phi \leq 1$, report every
  item that occurs $\geq \phi N$ times and no item that occurs
  $\leq (\phi-\epsilon)$ times.  It is optional whether to report
  items with counts between $(\phi - \epsilon) \Ns$ and $\phi
  \Ns$.}, Bhattacharyya et al.~\cite{BhattacharyyaDeWo16} proved a lower bound of $(1/\epsilon) \log (1/\phi) +
(1/\phi) \log |\calU| + \log \log N$ bits. Thus the
space requirement is large when $\epsilon$ is small, as is the
case for \Scalable solutions where $\phi$ is small, since $\epsilon < \phi$. 


\pparagraph{Bounded false positives}
Our algorithms for the \oedp are tunable: parameters can be set to allow bounded false positives (relaxing condition (b)).
We show that allowing some false positives results in fewer I/Os per element.


Allowing \FPs does not lead to substantial
space savings.
If we allow 
$O(1+\beta t)$ false positives in a stream with $t$ true positives,
for any constant $\beta$, a bound of $\Omega(N\log N)$ bits
follows via a standard communication-complexity reduction from the
probabilistic indexing
problem~\cite{roughgarden2016communication,kushilevitz1997communication}.
Besides, as argued above, \Scalable solutions to the heavy-hitter problem require
large space even when \FPs are allowed.

%
%

\pparagraph{Bounded reporting delay}
The national-security monitoring systems we are interested in (see~\secref{application}) 
can tolerate a slight delay in reporting 
when the high-risk event
gives sufficient warning for
intervention. We show that allowing a bounded delay in reporting (relaxing condition (c)) 
allows us to circumvent the lower bounds on $\phi$ imposed
by our online solution. Thus, bounded
delay is especially desirable when we want our \oedp algorithm to
scale to arbitrarily small reporting thresholds.  

Finally, we note that in a security setting like ours, all 
events need to be detected in real-time to mitigate the associated risk. 
Thus streaming algorithms 
for the heavy-hitter problem 
that require multiple passes over the data are not applicable. 


\subsection*{Online Event Detection in External Memory}

In this paper, we make the large space requirement ($\Omega (N)$ words) 
of the \oedp more
palatable by shifting most
of the storage from expensive RAM to lower-cost external storage, such
as SSDs or hard drives. In particular, we give cache-efficient
algorithms for the \oedp in the external-memory model.
In the external-memory model, RAM has size $M$, storage has unbounded
size, and any I/O access to external memory transfers blocks of size $B$.
Typically, blocks are large, i.e., $B > \log N$~\cite{FrigoLePr12,
  AggarwalVi88}.

At first, it may appear trivial to detect heavy hitters using external
memory: we can store the entire stream, so what is there to solve?
And this would be true in an offline setting.  We could find all
events by logging the stream to disk and then sorting it.

The technical challenge to online event detection in external memory
is that searches are slow.  A straw-man solution is to maintain an
external-memory dictionary to keep track of the count of every item,
and to query the dictionary after each stream item arrives. But this
approach is bottlenecked on dictionary searches.  In a
comparison-based dictionary, queries take $\Omega(\log_B N)$ I/Os, and
there are many data structures that match this
bound~\cite{Comer79,BayerMc72,BrodalFa03b, BenderDeFa00}.  This yields
an I/O complexity of $O(N\log_B N)$.  Even if we use external-memory
hashing, queries still take $\Omega(1)$ I/Os, which still gives a
complexity of $\Omega(N)$ I/Os~\cite{IaconoPa12,ConwayFaPh18}.  Both
these solutions are bottlenecked on the latency of storage, which is
far too slow for stream processing.

Data ingestion is \textit{not} the bottleneck in external
memory.  Optimal external-memory dictionaries (including
write-optimized dictionaries such as
$B^\epsilon$-trees~\cite{BrodalFa03b,BenderFaJa15}, COLAs~\cite{BenderFaFi07},
xDicts~\cite{BrodalDeFi10}, buffered repository trees~\cite{BuchsbaumGoVe00}, write-optimized skip lists~\cite{BenderFa17}, log structured merge trees~\cite{ONeilCh96},
and optimal external-memory hash
tables~\cite{IaconoPa12,ConwayFaPh18}) can perform inserts and deletes extremely quickly. 
The fastest can index
using $O\!\left(\frac 1B \log{ \frac NM}\right)$ I/Os per stream element, which is far less than one
I/O per item.  In practice, this means that even a system with just a
single disk can ingest hundreds of thousands of items per second.  For
example, at SuperComputing 2017, a single computer was easily able to
maintain a $B^\epsilon$-tree~\cite{BrodalFa03b} index of all
connections on a 600 gigabit/sec network~\cite{Bender_etal_2018}.
The system could also efficiently
answer offline queries.  What the system could not do, however, was
detect events online.  

In this paper, we show how to achieve 
online (or nearly online) event detection for essentially the same
cost as simply inserting the data into a $B^\epsilon$-tree or other
optimal external-memory dictionary.

\subsection*{Results}

As our main result, we present an external-memory algorithm
that solves the \oedp, for $\phi$ that is sufficiently large, 
at an amortized I/O cost that is substantially cheaper than performing one query
for each item in the stream.
\begin{result}\resultlabel{onlineexact}
  Given a stream $S$ of size $\Ns$ and $\phi > 1/M +\Omega(1/N)$, the online
  event-detection problem can be solved at an amortized cost of
  $O\left(\left(\frac{1}{B}+\frac{M}{(\phi M-1)N}\right)\log \frac NM\right)$ I/Os per stream item.
\end{result}
To put this in context, suppose that
$\phi > 1/M \mbox{ and } (\phi > B/N \mbox{ or } N > MB)$. 
Then the I/O cost of solving the \oedp is $O\! \left( \frac{1}{B} \log \left({\frac{N}{M}}\right) \right)$,
which is only a logarithmic factor larger than the na\"\i{}ve scanning
lower bound.  In this case, we eliminate the query bottleneck and
match the data ingestion rate of $B^\epsilon$-trees.


Our algorithm builds on the classic Misra-Gries algorithm~\cite{Cormode08,MisraGries82}, and thus
supports its generalizations. 
In particular, similar to the $(\epsilon, \phi)$-heavy hitters problem,
our algorithm can also be relaxed 
so that items with frequency between $(\phi - \epsilon) N$ 
and $\phi N$ may be reported. 
Allowing false positives lowers the amortized I/O cost to  
$O\left(\left(\frac{1}{B}+\frac{M}{(\phi M-1)N}\right)\log \frac {1}{\epsilon M} \right)$;
see \thmref{immediate-MG}.
For the \oedp (i.e., no \FPs), we set $\epsilon = 1/N$.

Next, we show that, by allowing a bounded delay in reporting, we
can extend this result to arbitrarily small $\phi$.  Intuitively, we
allow the reporting delay for an event $s_t$ to be proportional to the
time it took for the element $s_t$ to go from 1 to $\phi\Ns$ occurrences.
More formally, for a $\phi$-event $s_t$, define the
\defn{flow time} of $s_t$ to be $F_t = t-t_1$, where $t_1$ is the time
step of $s_t$'s first occurrence.  We say that an event-detection
algorithm has \defn{time stretch $1+\alpha$} if it reports each event
$s_t$ at or before time $t + \alpha F_t=t_1+(1+\alpha) F_t$.
\begin{result}\resultlabel{timelyexact}
  Given a stream $S$ of size $\Ns$ and $\alpha>0$, the \oedp can be
  solved for any $\phi \geq 1$ with time stretch $1+\alpha$ at an amortized cost of
  $O\left(\frac{\alpha+1}{\alpha}\frac{\log N/M}{B}\right)$ I/Os per stream item.
\end{result}

For constant $\alpha$, this is asymptotically as fast as simply
ingesting and indexing the
data~\cite{BrodalFa03b,BenderFaFi07,BuchsbaumGoVe00}. 
This algorithm can also be relaxed to allow false positives
and achieve an improved I/O complexity. Thus, this 
result yields an almost-online
solution to the $(\epsilon,\phi)$-heavy hitters problem for
arbitrarily small $\epsilon$ and $\phi$; see~\thmref{timestretch}.


Finally, we consider input distributions where the count
of items is drawn from a power-law distribution.
Berinde et al.~\cite{BerindeIn10} show that the
Misra-Gries algorithm gives improved guarantees
for the heavy-hitter problem when 
the input follows a Zipfian distribution
with exponent $\alpha >1$. If the item counts in the stream
follow a Zipfian distribution with exponent $\alpha$
if and only if they
follow a power-law distribution with exponent $\theta = 1+1/\alpha$~\cite{adamic2008zipf}.\footnote{Zipf and power-law
are often used interchangeably in the literature, however, they are different
ways to model the same phenomenon; see~\cite{adamic2008zipf}
and~\secref{powerlaw}
for details.}
%
%
%
%
As our algorithms are based
on Misra-Gries, we automatically get the same improvements 
when the power-law exponent $\theta \leq 2$ (i.e., $\alpha >1$).

We design a data structure for the \oedp problem that
supports a smaller threshold $\phi$ than in~\resultref{onlineexact} and achieves a better I/O
complexity when the count of items in the stream
follow a power-law distribution with exponent
$\theta > 2+1/(\log_2 (N/M))$.
%
For a representative specification of 1TB hard drive
and 32GB RAM, our algorithm is performant for 
Zipfian distributions with $\alpha \leq 0.94$,
a range that is frequently observed in practical data~\cite{ClausetShNe09, BerindeIn10, newman2005power,BreslauCa99,adamic2008zipf}.
For instance, the number of connections to the internet backbone
at the autonomous-system level follow a Zipfian 
distribution with exponent $\alpha = 0.8$~\cite{adamic2008zipf}.


\begin{result}\resultlabel{powerlaw}
  Given a stream $S$ of size $\Ns$, where the count of items 
follows a power-law distribution with exponent $\theta > 1$, and
$\phi = \Omega(\gamma/N)$, where 
  $\gamma = 2 {({N}/{M})}^{\frac{1}{\theta-1}}$, the
  \oedp can be solved at an amortized I/O
  complexity
  $O\! \left( \left(\frac{1}{B} + \frac{1}{{(\phi N - \gamma)}^{\theta-1}} \right) \log \frac {N}{ M} \right)$ per stream item.
\end{result}
%
In contrast to the worst-case
solution (\resultref{onlineexact}), \resultref{powerlaw} allows 
thresholds $\phi$ smaller than $1/M$
and an improved I/O complexity 
when the power-law exponent $\theta > 2 + 1/(\log_2 (N/M))$.
(This is because $\gamma/N < 1/M$ in this case; see~\secref{powerlaw} for details.) 

\section{Preliminaries} 
\seclabel{misra-gries}


This section reviews the Misra-Gries heavy-hitters algorithm~\cite{MisraGries82},
a building block of our algorithms in~\secref{external-MG} and~\secref{timestretch}.

%
\pparagraph{The Misra-Gries frequency estimator}
The Misra-Gries (MG) algorithm estimates the frequency of items in a stream.
Given an estimation error bound $\epsilon$ and a stream $S$ of $\Ns$
items from a universe $\calU$, the MG algorithm uses a single pass
over $S$ to construct a table $\calC$ with at most  $\lceil 1/\epsilon
\rceil$ entries.  Each table entry is an item $s \in \calU$ with a count, denoted $\calC[s]$.
For each $s \in \calU$ not in table $\calC$, let $\calC[s] = 0$.  Let $f_s$ be the number of occurrences
of item $s$ in stream $S$.  The MG algorithm guarantees that $\calC[s]\leq f_s< \calC[s]+\epsilon \Ns$ for all $s \in \calU$.


MG initializes $\calC$ to an empty table and then processes the items
in the stream one after another as described below.  For each $s_{i}$ in $S$,
\begin{itemize}[noitemsep,nolistsep]
\item If $s_i\in \calC$, increment counter $\calC[s_i]$.
\item If $s_i\not\in \calC$ and $|\calC| < \lceil 1 / \epsilon \rceil$, insert $s_i$ into $\calC$ and 
  set $\calC[s_i]\leftarrow1$.
\item If $s_i \notin \calC$ and $|\calC| = \lceil 1 / \epsilon \rceil$,
  then for each $x \in \calC$ decrement $\calC[x]$ and delete its entry
  if $\calC[x]$ becomes 0.
\end{itemize}

We now argue that $\calC[s]\leq f_s< \calC[s]+\epsilon \Ns$.  We
have $\calC[s] \leq f_s$ because $\calC[s]$ is
incremented only for an occurrence of $s$ in the stream. MG underestimates counts
only through the decrements in the third condition above. 
This step decrements $\lceil 1/\epsilon \rceil + 1$ counts at once: the
item $s_i$ that caused the decrement, since it is never added to
the table, and each item in the table. There can be at most
$\lfloor \Ns / \lceil 1/\epsilon + 1\rceil \rfloor < \epsilon \Ns$ executions of this
decrement step in the algorithm. Thus, $f_s < \calC[s] + \epsilon \Ns$.

\pparagraph{The $(\epsilon, \phi)$-heavy hitters problem}
The MG algorithm can be used to solve the
\defn{$(\epsilon,\phi)$-heavy hitters problem}, which requires us to report
all items $s$ with $f_s\geq \phi\Ns$ and not to report any item
$s$ with $f_s\leq (\phi-\epsilon)\Ns$.  Items that occur strictly between
$(\phi-\epsilon)\Ns$ and $\phi \Ns$ times in $S$ are neither required
nor forbidden in the reported set. 

To solve the problem, run the MG algorithm on the stream with  
error parameter $\epsilon$.  Then
iterate over the set $\calC$ and report any item $s$ with $\calC[s]
> (\phi - \epsilon)\Ns $.  Correctness follows from  1)
if $f_s\leq (\phi - \epsilon)\Ns$, then
$s$ will not be reported,
since $\calC[s] \leq f_s \leq (\phi
- \epsilon)\Ns$, and 2) if $f_s\geq \phi
\Ns$, then $s$ will be reported, since
$\calC[s] > f_s - \epsilon \Ns \geq \phi \Ns - \epsilon
\Ns$.

\pparagraph{Approximate online-event detection}
Analogous to the $(\epsilon, \phi)$-heavy hitters
problem, we define the \defn{approximate \oedp} as: 
\begin{itemize}[nolistsep]
\item Report all $\phi$-events $s_t$ at time $t$,
\item Do not report any item $s_i$ with count at most $(\phi-\epsilon)\Ns$
\item Items with count greater than $(\phi-\epsilon)\Ns$ and less than $\phi N$
are neither required nor forbidden from being reported.
\end{itemize}

All the errors with respect to \oedp in the $(\epsilon, \phi)$-heavy hitters
problem and the approximate \oedp are \defn{false positives}, that is,
non-events (items 
with frequency between $(\phi - \epsilon)N$ and $\phi N)$
that get reported as $\phi$-events. No false negatives
are allowed as all $\phi$-heavy hitters and $\phi$-events must be reported. In the rest 
of the paper, the term error only refers to false-positive errors.




\pparagraph{Space usage of the MG algorithm}
For a frequency estimation error of $\epsilon$, Misra-Gries uses
$O(\lceil 1/\epsilon \rceil)$ words of storage, assuming each stream
item and each count occupy $O(1)$ words.  

Bhattacharyya et
al.~\cite{BhattacharyyaDeWo16} showed that, by using hashing,
sampling, and allowing a small probability of error, Misra-Gries can
be extended to solve the $(\epsilon,\phi)$-Heavy Hitters problem using
$1/\phi$ slots that store counts and an additional $(1/\epsilon) \log
(1/\phi) + \log \log \Ns$ bits, which they show is optimal.

For the exact $\phi$-hitters problem, that is, for $\epsilon =1/\Ns$,
the space requirement is large---$\Ns$ slots.  Even
the optimal algorithm of Bhattacharyya uses $\Omega(\Ns)$ bits of
storage in this case, regardless of $\phi$.  

%



\section{External-Memory Misra-Gries and Online Event Detection}
\seclabel{external-MG}

In this section, we design an efficient external-memory version of the
core Misra-Gries frequency estimator.  This immediately gives 
an efficient external-memory algorithm for the $(\epsilon,\phi)$-heavy
hitters problem.  We then extend our external-memory Misra-Gries
algorithm to support I/O-efficient immediate event reporting, e.g.,  for 
online event detection.

When $\epsilon=o(1/M)$, then simply running the standard Misra-Gries
algorithm can result in a cache miss for every stream element,
incurring an amortized cost of $\Omega(1)$ I/Os per element.  Our
construction reduces this to $O(\frac{\log (1/(\epsilon M))}{B})$,
which is $o(1)$ when $B=\omega\big(\log \big(\frac{1}{\epsilon M}\big)\big)$.

\subsection{External-memory Misra-Gries}\subseclabel{external-MG-subsec}\seclabel{external-MG-subsec}

Our external-memory Misra-Gries data structure is a sequence of Misra-Gries
tables, $\calC_0, \ldots, \calC_{L-1}$, where $L = 1+\lceil\log_r (1/(\epsilon
M))\rceil$ and $r ~(>1)$ is a parameter we set later. The size of the table
$\calC_i$ at level $i$ is $r^iM$, so the size of the last level is at least
$1/\epsilon$.

Each level acts as a Misra-Gries data structure.  Level 0 receives the input
stream.  Level $i > 0$ receives its input from level $i-1$, the level above.
Whenever the standard Misra-Gries algorithm running on the table $\calC_i$ at
level $i$ would decrement a item count, the new data structure decrements that
item's count by one on level $i$ and sends one instance of that item to the level
below ($i+1$).

The external-memory MG algorithm processes the input stream by inserting each
item in the stream into $\calC_0$.  To insert an item $x$ into level $i$, do
the following: \begin{itemize}[noitemsep,nolistsep]
\item If $x \in \calC_i$, then increment $\calC_i[x]$.  \item If $x \notin
\calC_i$, and $|\calC_i| \leq r^iM-1$, then $\calC_i[x] \gets 1$.  \item If $x
\notin \calC_i$ and $|\calC_i| = r^iM$, then, for each $x'$ in $\calC_i$,
decrement $\calC_i[x']$;  remove it from $\calC_i$ if $\calC_i[x']$ becomes 0.
If $i < L-1$, recursively insert $x'$ into $\calC_{i+1}$.  \end{itemize}

 We call the process of decrementing the counts of all the items at
  level $i$ and incrementing all the corresponding item counts at level
  $i+1$ a \defn{flush}. 

\pparagraph{Correctness} We first show that the external-memory MG algorithm
still meets the guarantees of the Misra-Gries frequency estimation
algorithm.  In fact, we
show that every prefix
of levels $\calC_0,\ldots,\calC_j$ is a Misra-Gries frequency
estimator, with the accuracy of the frequency estimates increasing with $j$.

\begin{lemma}
  Let $\widehat{\calC}_j[x] = \sum_{i=0}^{j}\calC_i[x]$ (where
  $\calC_i[x]=0$ if $x\not\in\calC_i$).  Then, the following holds:
\begin{itemize}
\item  $\widehat{\calC}_j[x]\leq f_x < \widehat{\calC_j}[x] + ({\Ns}/({r^jM}))$, and,
\item  $\widehat{\calC}_{L-1}[x]\leq f_x <
  \widehat{\calC}_{L-1}[x] + \epsilon\Ns$.
\end{itemize}
\end{lemma}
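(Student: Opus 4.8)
The plan is to treat the $L$ levels as a cascade of ordinary Misra--Gries tables and apply the single-table analysis of \secref{misra-gries} level by level. For $i\ge 0$ let $S_i$ be the \emph{stream fed into level $i$}: $S_0=S$, and $S_{i+1}$ is the sequence of items that level $i$ emits while flushing --- namely, for each flush at level $i$, the one copy of each item whose counter is decremented, together with the item that triggered the flush --- listed in the order they are produced; at level $L-1$ the analogous sequence $S_L$ is the ``drop pile.'' Because level $i$ is indifferent to what becomes of the items it expels, the table $\calC_i$ behaves exactly like a stand-alone Misra--Gries table of size $r^iM$ run on $S_i$. Hence, writing $D_i$ for the number of flushes (decrement steps) at level $i$, the counting argument of \secref{misra-gries} gives $D_i\le |S_i|/(r^iM+1)$, and $\calC_i[x]\le f^{(i)}_x$, where $f^{(i)}_x$ is the number of occurrences of $x$ in $S_i$.

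The first ingredient is a telescoping identity: for every $i<L-1$,
\[
 f^{(i)}_x \;=\; \calC_i[x] \;+\; f^{(i+1)}_x ,
\]
and likewise $f^{(L-1)}_x=\calC_{L-1}[x]+f^{(L)}_x$. To see this, track an occurrence of $x$ as it reaches level $i$: it is either installed in $\calC_i$ (the first two cases of the insert routine) or, if it triggers an overflow, handed straight to $S_{i+1}$; and any occurrence that was installed in $\calC_i$ is either still counted by $\calC_i[x]$ at the end or was decremented during some later flush and thereby also passed to $S_{i+1}$. Summing over $i=0,\dots,j$ collapses to $f_x=\widehat\calC_j[x]+f^{(j+1)}_x$. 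Since $f^{(j+1)}_x\ge 0$, this already gives the lower bound $\widehat\calC_j[x]\le f_x$, so it remains only to bound $f^{(j+1)}_x$ from above.

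For the upper bound I would first note that each flush at level $i$ pushes exactly $r^iM+1$ items into $S_{i+1}$ (one per slot of the full table $\calC_i$, plus the trigger), so $|S_{i+1}| = (r^iM+1)D_i \le |S_i|$; chaining from $|S_0|=N$ gives $|S_j|\le N$ for all $j$. Next, a single flush at level $j$ contributes at most one copy of any fixed $x$ to $S_{j+1}$ --- either the one decremented copy (if $x\in\calC_j$ at the time) or the trigger (if $x$ is the trigger), and these are mutually exclusive since a trigger is by definition not in the table. Hence $f^{(j+1)}_x \le D_j \le |S_j|/(r^jM+1) \le N/(r^jM+1) < N/(r^jM)$, which together with $f_x=\widehat\calC_j[x]+f^{(j+1)}_x$ proves the first bullet. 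For the second bullet take $j=L-1$: the same reasoning yields $f^{(L)}_x\le D_{L-1}\le N/(r^{L-1}M+1)$, and since $r^{L-1}M\ge 1/\epsilon$ this is strictly less than $\epsilon N$.

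The step I expect to be the crux is pinning down the expulsion semantics. The per-level, per-item error $N/(r^jM)$ holds only if the item triggering an overflow at level $i$ is \emph{forwarded} to level $i+1$ rather than simply dropped; if it were discarded, the telescoping identity would carry an extra term counting how often $x$ happens to be a flush trigger, which has no good bound in terms of $f_x$, and the lemma would fail (already for $M=1,r=2$ one can exhibit a stream where the prefix error reaches $N/(r^jM)$ exactly). Once the cascade $S_0,S_1,\dots$ is set up correctly, the remainder is the routine Misra--Gries counting bound $D_i\le |S_i|/(r^iM+1)$ re-read one level at a time, plus the bookkeeping for the telescoping sum. Equivalently, the whole argument can be phrased as an induction on $j$: the base case $j=0$ is one size-$M$ Misra--Gries table on $S$, and the inductive step uses that $\calC_j$ is Misra--Gries on the overflow stream of $\calC_{j-1}$, whose length is at most $N$.
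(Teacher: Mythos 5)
Your proof is correct and follows essentially the same route as the paper's: the paper observes that internal flushes conserve $\widehat{\calC}_j[x]$, so the prefix $\calC_0,\ldots,\calC_j$ behaves as a single Misra--Gries table whose only loss events are level-$j$ flushes, of which there are at most $N/(r^jM+1)$, each costing at most one unit of $x$ --- and your overflow-stream decomposition with the telescoping identity is exactly this counting made explicit. Your remark on the expulsion semantics is well taken: the trigger of an overflow must be forwarded to the next level (matching the standard MG accounting of $\lceil 1/\epsilon\rceil+1$ decrements per flush described in the preliminaries), and the paper's analysis relies on this even though its insertion pseudocode does not state it explicitly.
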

\begin{proof}
  Decrementing the count for an element $x$ in level
  $i<j$ and inserting it on the next level does not change
  $\widehat{\calC}_j[x]$.  This means that $\widehat{\calC}_j[x]$
  changes only when we insert an item $x$ from the input stream into
  $\calC_0$ or when we decrement the count of an element in level $j$.
  Thus, as in the original Misra-Gries algorithm, $\calC_j[x]$ is only
  incremented when $x$ occurs in the input stream, and is decremented
  only when the counts for $r^jM$ other elements are
  also decremented.  Following the same arguments as the MG algorithm, this
  is sufficient to establish the first inequality.
  The second inequality follows from the first, and the fact that
  $r^{L-1}M\geq 1/\epsilon$.
\end{proof}

\pparagraph{Heavy hitters}
Since our external-memory Misra-Gries data structure matches the
original Misra-Gries error bounds, it can be used to solve the
$(\epsilon,\phi)$-heavy hitters problem when the regular Misra-Gries
algorithm requires more than $M$ space.  First, insert each element of
the stream into the data structure.  Then, iterate over the sets
$\calC_i$ and report any element $x$ with counter
$\widehat{\calC}_{L-1}[x] > (\phi - \epsilon)\Ns $.

\pparagraph{I/O complexity}
We now analyze the I/O complexity of our external-memory Misra-Gries
algorithm.  For concreteness, we assume each level is implemented as a
B-tree, although the same basic algorithm works with sorted arrays
(included with fractional cascading from one level to the next, similar to
cache-oblivious lookahead arrays~\cite{BenderFaFi07}) or hash tables with
linear probing and a consistent hash function across levels (similar to
cascade filters~\cite{BenderFaJo11}).

\begin{lemma}
For a given $\epsilon \geq 1/N$, the amortized I/O cost of 
insertion in the external-memory Misra-Gries data structure is $O(\frac 1B \log {\frac {1}{\epsilon M}})$. 
\end{lemma}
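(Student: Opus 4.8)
The plan is a level-by-level external-memory amortization; the only nontrivial point is controlling how much traffic the Misra-Gries decrements push down into each level.

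First I would dispose of level $0$: it has size $r^0 M = M$ and so lives entirely in RAM, so increments, creations, and even the scan of $\calC_0$ performed during a flush of level $0$ cost no I/Os; the only I/Os come from a flush of some level $i$ writing into level $i+1$. I would implement a flush of level $i$ as a sorted merge --- read $\calC_i$ (at most $r^i M$ items) and $\calC_{i+1}$ (at most $r^{i+1} M$ items) in key order, perform the prescribed decrements and insertions, and rewrite $\calC_{i+1}$ --- at a cost of $O(r^{i+1} M / B + 1)$ I/Os, dominated by the larger (lower) level, regardless of whether the levels are sorted arrays or B-trees laid out in key order. If rewriting $\calC_{i+1}$ overfills it, that triggers a flush of level $i+1$, whose cost I would charge separately.

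The crux is bounding the number $F_i$ of flushes at each level $i$. Let $I_i$ denote the total number of item-insertions ever made into level $i$ over the whole stream, so $I_0 = N$. A flush of level $i$ occurs only when $|\calC_i| = r^i M$, and it performs exactly $r^i M$ decrements; since every count at level $i$ starts at $0$, never goes negative, and is created or incremented only by an insertion into level $i$, the total number of decrements performed at level $i$ over the stream is at most $I_i$, so $F_i \cdot r^i M \le I_i$. Each of those decrements sends exactly one instance of its item down into level $i+1$, and for $i \ge 0$ these are the only insertions into level $i+1$ (the item that triggered the flush is discarded, exactly as in plain Misra-Gries, and generates no further downward traffic), so $I_{i+1}$ equals the number of decrements at level $i$, hence $I_{i+1} \le I_i$. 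Telescoping gives $I_i \le N$ and therefore $F_i \le N/(r^i M)$ for every $i$.

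Combining the two bounds, the total I/O cost is
\[ \sum_{i=0}^{L-1} F_i\cdot O(r^{i+1} M/B + 1) \;\le\; \sum_{i=0}^{L-1} \frac{N}{r^i M}\cdot O(r^{i+1}M/B+1) \;=\; O(NrL/B) + O(N/M). \]
Since $r$ is a constant, $L = 1 + \lceil \log_r (1/(\epsilon M)) \rceil = O(\log(1/(\epsilon M)))$, and $M \ge B$, this is $O((N/B)\log(1/(\epsilon M)))$ in total, i.e., $O((1/B)\log(1/(\epsilon M)))$ amortized per stream item; when $\epsilon \ge 1/M$ the structure is just the in-RAM level $\calC_0$ and no I/Os occur. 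The step I expect to require the most care is the flush-counting argument: pinning down that each flush makes exactly $r^i M$ decrements, that decrements never exceed insertions because counts stay nonnegative, and that the flush-triggering item contributes no extra downstream insertions, so that $I_{i+1} \le I_i$ telescopes cleanly to $I_i \le N$. Everything else is the geometric sum over the $L$ levels and routine ceiling/tall-cache bookkeeping.
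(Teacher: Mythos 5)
Your proposal is correct and is essentially the paper's argument: both charge the $O(r^{i+1}M/B)$ cost of a level-$i$ flush to the $\Theta(r^iM)$ elements it moves down one level, and observe that each element descends at most $L=O(\log_r(1/(\epsilon M)))$ levels, giving $O(rL/B)$ amortized I/Os. Your version just carries out the same amortization in aggregate (bounding the flush count $F_i$ via insertion counts and summing the geometric series) rather than per element, and fills in details the paper leaves implicit.
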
 
\begin{proof}
   Recall that the process of decrementing the counts of all the items at
  level $i$ and incrementing all the corresponding item counts at level
  $i+1$ is a flush. A flush can be implemented by rebuilding the
  B-trees at both levels, which can be done in $O(r^{i+1}M/B)$ I/Os.

  Each flush from level $i$ to level $i+1$
  moves $r^i M$ stream elements down one level, so the amortized cost
  to move one stream element down one level is $O(\frac{r^{i+1}M}{B} /
  (r^iM)) = O(r/B)$ I/Os.
  
  Each stream element can be moved down at most $L$ levels. Thus, the
  overall amortized I/O cost of an insert is $O(rL/B) = O((r/B) \log_r
	(1/(\epsilon M)))$, which is minimized at $r = e$.
\end{proof}
When no false positives are allowed, that is, $\epsilon = 1/\Ns$,
the I/O complexity of the external-memory MG algorithm is
$O(\frac 1B \log{ \frac NM})$.


\subsection{Online event-detection}
\seclabel{immediate-MG}
We now extend our external-memory Misra-Gries data structure to solve
the online event-detection problem.  In particular, we show that for  a
threshold $\phi$ that is sufficiently large, we can report
$\phi$-events as soon as they occur.

A first attempt to add immediate reporting to our external-memory Misra-Gries
algorithm is to compute $\widehat{\calC}_{L-1}[s_i]$ for each
stream event $s_i$ and report $s_i$ as soon as
$\widehat{\calC}_{L-1}[s_i]>(\phi-\epsilon)\Ns$.  However, this 
requires querying $\calC_i$ for $i=0,\ldots,L-1$ for every stream
item and can cost up to $O(\log(1/\epsilon M))$ I/Os per stream
item.

We avoid these expensive queries by using the properties of the in-memory
Misra-Gries frequency estimator $\calC_0$.
If $\calC_0[s_i] \leq
(\phi-1/M)\Ns$, then we know that $f_{s_i}\leq \phi\Ns$ and we therefore
do not have to report $s_i$, regardless of the count for $s_i$ in the lower levels on
disk of the external-memory data structure.

\pparagraph{Online event-detection in external memory}
We modify our external-memory Misra-Gries algorithm to support
online event detection as follows.  Whenever we increment
$\calC_0[s_i]$ from a value that is at most $(\phi-1/M)\Ns$ to a value
that is greater than $(\phi-1/M)\Ns$, we compute $\widehat{\calC}_{L-1}[s_i]$ and
report $s_i$ if
$\widehat{\calC}_{L-1}[s_i]=\lceil(\phi-\epsilon)\Ns\rceil$.  For each
entry $\calC_0[x]$, we store a bit indicating whether we have
performed a query for $\widehat{\calC}_{L-1}[x]$.  As in our basic
external-memory Misra-Gries data structure, if the count for an entry
$\calC_0[x]$ becomes 0, we delete that entry. This means we might
query for the same item more than once if its in-memory count
crosses the $(\phi-1/M)\Ns$ threshold, it gets removed from
$\calC_0$, and then its count crosses the $(\phi-1/M)\Ns$
threshold again.  
As we will see below, this has no affect
on the overall I/O cost of the algorithm.\footnote{It is possible
to prevent repeated queries for an item but we allow it
as it does not hurt the asymptotic performance.}

In order to avoid reporting the same item more than once, we can
store, with each entry in $\calC_i$, a bit indicating whether that item
has already been reported.  Whenever we report a item $x$, we set the
bit in $\calC_0[x]$.  Whenever we flush a item from level $i$
to level $i+1$, we set the bit for that item on level $i+1$ if it is
set on level $i$.  When we delete the entry for a item that has the bit
set on level $L-1$, we add an entry for that item on a new level
$\calC_L$.  This new level contains only items that have already been
reported.  When we are checking whether to report a item during
a query, we stop checking further and omit
reporting as soon as we reach a level where the bit is set.
None of these changes affect the I/O
complexity of the algorithm.

\pparagraph{I/O complexity}
We assume that computing
$\widehat{\calC}_{L-1}[x]$ requires $O(L)$ I/Os.  This is true if the
levels of the data structure are implemented as sorted arrays with
fractional cascading.

We first state the result for the approximate version of the online event-detection problem
that allows elements with frequency between $(\phi  - \epsilon) N$ and $\phi N$ to be
reported as false positives. 

Then, we set $\epsilon = 1/N$ to get the result for the \oedp.
\begin{theorem}\thmlabel{immediate-MG}
Given a stream $S$ of size $\Ns$ and parameters $\epsilon$ and $\phi$, where $1/\Ns \leq \epsilon < \phi< 1$ and 
$\phi > 1/M +\Omega(1/N)$,
the approximate \oedp can be solved 
at an  amortized   I/O complexity
$O\! \left( \left (\frac{1}{B} + \frac{M}{(\phi M - 1) \Ns} \right) 
 \log {\frac{1}{\epsilon M}} \right)$
per stream item.
\end{theorem}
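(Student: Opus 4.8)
The plan is to bound the total I/O cost as the sum of two contributions: the cost of maintaining the external-memory Misra-Gries data structure (handled by the earlier lemma), plus the cost of the extra threshold queries triggered by the online-reporting modification. The first contribution is $O\!\left(\frac{1}{B}\log\frac{1}{\epsilon M}\right)$ amortized per stream item, exactly as established for the plain external-memory Misra-Gries structure; the reporting bookkeeping (the ``already queried'' and ``already reported'' bits, the extra level $\calC_L$) was argued in the text not to change this. So the real work is to bound the number of threshold queries and multiply by the $O(L) = O(\log\frac{1}{\epsilon M})$ cost of computing $\widehat{\calC}_{L-1}[x]$ for a single item.

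First I would count how many times a query can be triggered. A query for item $x$ fires only when $\calC_0[x]$ is incremented from a value $\le (\phi - 1/M)\Ns$ to a value $> (\phi - 1/M)\Ns$, i.e., when $\calC_0[x]$ crosses the threshold $\tau := (\phi - 1/M)\Ns$ from below. Between two consecutive such crossings for the same $x$, the entry $\calC_0[x]$ must have been deleted (dropped to $0$) and then climbed back up past $\tau$; climbing from $0$ to $\tau$ requires at least $\tau$ increments of $\calC_0[x]$, and each increment of an in-memory counter is ``charged'' to an occurrence of that item in the stream (each stream item causes at most one such increment at level $0$). Hence the number of threshold-crossing queries is at most $\Ns/\tau = \Ns/((\phi - 1/M)\Ns) = 1/(\phi - 1/M) = M/(\phi M - 1)$. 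This is the key counting step, and I expect it to be the main (though not difficult) obstacle: being careful that the decrement/flush steps only ever \emph{decrease} $\calC_0[x]$ so they cannot create spurious crossings, and that once an item is reported (bit set) we never re-query it, so every query is genuinely preceded by a fresh climb of length $\ge \tau$.

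Having bounded the query count by $M/(\phi M - 1)$ and each query's cost by $O(L) = O(\log\frac{1}{\epsilon M})$ I/Os, the total extra cost over the whole stream is $O\!\left(\frac{M}{\phi M - 1}\log\frac{1}{\epsilon M}\right)$ I/Os, which amortized over $\Ns$ stream items is $O\!\left(\frac{M}{(\phi M - 1)\Ns}\log\frac{1}{\epsilon M}\right)$ per item. Adding the maintenance cost $O\!\left(\frac{1}{B}\log\frac{1}{\epsilon M}\right)$ gives the claimed amortized bound $O\!\left(\left(\frac{1}{B} + \frac{M}{(\phi M - 1)\Ns}\right)\log\frac{1}{\epsilon M}\right)$. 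Finally, I would verify correctness: by the earlier lemma $\widehat{\calC}_{L-1}[x] \le f_x < \widehat{\calC}_{L-1}[x] + \epsilon\Ns$, and by the in-memory MG guarantee $\calC_0[s_i] \le f_{s_i} < \calC_0[s_i] + \Ns/M$, so whenever $f_{s_i}$ reaches $\lceil\phi\Ns\rceil$ we have $\calC_0[s_i] > f_{s_i} - \Ns/M \ge \phi\Ns - \Ns/M = \tau$, meaning a crossing (and hence a query) must have occurred at or before the event; and the query reports $s_i$ precisely when $\widehat{\calC}_{L-1}[s_i] = \lceil(\phi-\epsilon)\Ns\rceil$, which happens no later than the true $\phi$-event and never for an item with $f_x \le (\phi-\epsilon)\Ns$. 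Setting $\epsilon = 1/\Ns$ makes the admissible window $(\phi-\epsilon)\Ns < f_x$ collapse to $f_x \ge \phi\Ns - 1$, i.e., exact detection, and substitutes $\log\frac{1}{\epsilon M} = \log\frac{\Ns}{M}$, recovering Result~\ref{result:onlineexact}.
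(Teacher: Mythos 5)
Your proposal is correct and follows essentially the same route as the paper's proof: the same decomposition into the $O\!\left(\frac{1}{B}\log\frac{1}{\epsilon M}\right)$ insertion cost plus the query cost, the same bound of $M/(\phi M-1)$ on the number of threshold-crossing queries (each requiring a climb of $(\phi-1/M)\Ns$ increments in $\calC_0$, charged to distinct stream occurrences), and the same $O(\log\frac{1}{\epsilon M})$ cost per query. Your write-up is in fact slightly more explicit than the paper's, which dismisses correctness with ``follows from the arguments above'' and states the query count without spelling out the charging argument.
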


\begin{proof}
  Correctness follows from the arguments above.  We need only analyze
  the I/O costs.  We analyze the I/O costs of the insertions and the
  queries separately.



  The amortized cost of performing insertions is
  $O(\frac{1}{B}\log\frac{1}{\epsilon M})$.

  

  To analyze the query costs, let $\epsilon_0 = 1/M$, i.e., the
  frequency-approximation error of the in-memory level of our data
  structure.
  
  Since we perform at most one query each time an item's count in
  $\calC_0$ goes from 0 to $(\phi - \epsilon_0)\Ns$, the total number
  of queries is at most $\Ns / ((\phi - \epsilon_0)\Ns)=1/(\phi -
  \epsilon_0) = M/(\phi M - 1)$.  Since each query costs $O(\log
  (1/\epsilon M))$ I/Os, the overall amortized I/O complexity of the
  queries is $O\! \left( \left (\frac{M}{(\phi M - 1) \Ns} \right)
  \log {\frac{1}{\epsilon M}} \right)$.
%
\end{proof}

\pparagraph{Exact reporting}  If no false positives
are allowed, we set $\epsilon = 1/\Ns$ in~\thmref{immediate-MG}.  
For error-free reporting, we must store all the items, which
increases the number of levels and thus the I/O cost.  In particular,
we have the following result on \oedp.

\begin{corollary}\corlabel{immediate-noerrors}
Given a stream $S$ of size $\Ns$ and $\phi > 1/M +\Omega(1/N)$  
%
the \oedp can be solved 
at amortized I/O complexity
  $O\! \left( \left (\frac{1}{B} + \frac{M}{(\phi M - 1) \Ns} \right)
    \log  {\frac{N}{M}} \right)$ per stream item.
\end{corollary}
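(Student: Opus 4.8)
The plan is to obtain the corollary as the special case $\epsilon = 1/\Ns$ of \thmref{immediate-MG}. First I would check that this choice of $\epsilon$ is admissible. The hypothesis $\phi > 1/M + \Omega(1/N)$, together with the standing assumption $M \le \Ns$, gives $\epsilon = 1/\Ns \le 1/M < \phi < 1$, so the precondition $1/\Ns \le \epsilon < \phi < 1$ of \thmref{immediate-MG} is met, and the precondition $\phi > 1/M + \Omega(1/N)$ is inherited verbatim. Hence \thmref{immediate-MG} applies with $\epsilon = 1/\Ns$.

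Next I would argue that with $\epsilon = 1/\Ns$ the approximate \oedp coincides with the (exact) \oedp, i.e., the reported set contains no false positives. By definition, the approximate solver never reports an item whose count is at most $(\phi - \epsilon)\Ns = \phi\Ns - 1$, and is merely permitted (not required) to report items whose count lies strictly between $\phi\Ns - 1$ and $\phi\Ns$. Since all item counts are integers, this open interval is empty, so every reported item has count exactly $\lceil\phi\Ns\rceil = T$, which is precisely a $\phi$-event. Condition (a) (all events reported, no false negatives) is already guaranteed by \thmref{immediate-MG}, and the online reporting behavior (condition (c)) is unchanged, so the data structure of \secref{immediate-MG}, run with $\epsilon = 1/\Ns$, solves the \oedp.

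Finally I would substitute $\epsilon = 1/\Ns$ into the I/O bound of \thmref{immediate-MG}. For error-free reporting the last level of the external-memory Misra-Gries structure must now have size at least $1/\epsilon = \Ns$, so the number of levels becomes $L = 1 + \lceil\log_r(\Ns/M)\rceil$ and $\log(1/(\epsilon M)) = \log(\Ns/M)$; this is the "increase in the number of levels" that error-free reporting forces. Plugging $\epsilon = 1/\Ns$ into $O\!\left(\left(\frac{1}{B} + \frac{M}{(\phi M - 1)\Ns}\right)\log\frac{1}{\epsilon M}\right)$ therefore yields $O\!\left(\left(\frac{1}{B} + \frac{M}{(\phi M - 1)\Ns}\right)\log\frac{N}{M}\right)$, the claimed amortized complexity per stream item.

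There is essentially no obstacle here: the corollary is a direct specialization of \thmref{immediate-MG}, and the only point requiring any thought is the integrality observation that closes the gap between approximate and exact reporting when $\epsilon = 1/\Ns$ (equivalently, that the "optional" reporting interval $(\phi\Ns - 1, \phi\Ns)$ contains no admissible counts).
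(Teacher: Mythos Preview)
Your proposal is correct and matches the paper's approach exactly: the paper derives the corollary simply by setting $\epsilon = 1/\Ns$ in \thmref{immediate-MG}, with a one-line remark that error-free reporting forces storing all items and hence more levels. Your write-up is in fact more careful than the paper's, since you explicitly verify the preconditions and spell out the integrality observation that collapses the approximate \oedp to the exact \oedp at $\epsilon = 1/\Ns$.
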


\pparagraph{Summary} The external-memory MG algorithm supports
a throughput at least as fast as optimal write-optimized
dictionaries~\cite{BrodalFa03b,BenderFaJa15,BenderFaFi07,BrodalDeFi10,
  BuchsbaumGoVe00,BenderFa17}, while estimating the counts as well as
an enormous RAM.  It maintains count estimates at
different granularities across the levels. Not all estimates are actually needed for
each structure, but given a small number of levels, we can refine 
the count estimates by looking in only a few additional
locations.

The external-memory MG algorithm helps us solve the \oedp.  The smallest MG sketch (which fits in memory) is the most
important estimator here, because it serves to sparsify queries to the
rest of the structure. When such a query gets triggered, we need the
total counts from the remaining $\log \frac{N}{M}$ levels for the
(exact) online event-detection problem but only $\log
\frac{1}{\epsilon M}$ levels when approximate thresholds are
permitted.  In the next two sections, we exploit other advantages of
this cascading technique to support much lower $\phi$ without
sacrificing I/O efficiency.

%
%


 


\section{Event Detection With Time-Stretch}\seclabel{timestretch}
The external-memory Misra-Gries algorithm described in
\secref{immediate-MG} reports events immediately, albeit at a higher
amortized I/O cost for each stream item.  In this section, we show
that, by allowing a bounded delay in the reporting of events, we can
perform event detection asymptotically as cheaply as if we reported
all events only at the end of the stream.

\pparagraph{Time-stretch filter}
We design a new data structure to guarantee time-stretch called the
\defn{time-stretch filter}.  Recall that, in order to guarantee a
time-stretch of $\alpha$, we must report an item $x$ no later than
time $t_1+(1+\alpha)F_t$, where $t_1$ is the time of the
first occurrence of $x$, and $F_t$ is the flow time of $x$.

Similar to the external-memory MG structure, the time-stretch filter consists
of $L = \log_r({1}/{(\epsilon M)})$ levels
$\calC_0,\ldots,\calC_{L-1}$. The $i$th level has size $r^iM$.
Items are flushed from lower levels to higher levels.

Unlike the data structure in~\secref{immediate-MG} for the \oedp, all events are detected during 
the flush
operations.  Thus, we never need to perform point queries.  This
means that (1) we can use simple sorted arrays to represent each level
and, (2) we don't need to maintain the invariant that level 0 is a
Misra-Gries data structure on its own.



\pparagraph{Layout and flushing schedule} 
We split the table at each level $i$ into $q = (\alpha + 1)/\alpha$ equal-sized
\defn{bins} $b_1^i, \ldots, b_q^i$, 
each of size $\frac{\alpha}{\alpha + 1} (r^iM)$.
The capacity of a bin is defined by the sum of the counts of the items
in that bin, i.e., a bin at level $i$ can become full because it contains
$\frac{\alpha}{\alpha + 1} (r^iM)$ items, each with count 1, or 1 item
with count $\frac{\alpha}{\alpha + 1} (r^iM)$, or any other such combination.

We maintain a strict flushing schedule to obtain the time-stretch
guarantee.  The flushes are performed at the granularity of bins
(rather than entire levels).
Each stream item is inserted into $b_1^0$.
Whenever a bin $b_1^i$ becomes full (i.e., the sum of the counts of the
items in the bin is equal to its size), we shift all the bins on
level $i$ over by one (i.e., bin 1 becomes bin 2, bin 2 becomes bin 3,
etc), and we move all the items in $b_q^i$ into bin
$b_1^{i+1}$.
Since the bins in level $i+1$ are $r$ times larger than the bins in level $i$, bin $b_1^{i+1}$
becomes full after exactly $r$ flushes from  $b_q^i$.
When this happens, we perform a flush on level $i+1$ and so on.  Starting from the beginning, every
$r^{i-1}M$ elements from the stream
causes a flush that involves level $i$.

Finally, during a flush involving levels $0, \ldots, i$, where $i \leq L-1$, we scan these levels 
and for each item $k$ in the input levels, we
sum the counts of each instance of $k$. If the total count is greater than $(\phi - \epsilon)
\Ns$, and (we have not reported it before) then we report\footnote{For each reported item, we set a flag 
that indicates it has been reported, to avoid duplicate reporting of events.} $k$.


\pparagraph{Correctness} We first prove correctness of the time-stretch filter.

\begin{lemma}\lemlabel{ts-correct}
The time-stretch filter reports each $\phi$-event $s_t$ occurring
at time $t$ at or before $t + \alpha F_t$, where $F_t$
is the flow-time of $s_t$.
\end{lemma}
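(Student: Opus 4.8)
The plan is to track a single copy of the event element --- the oldest one --- and show that it triggers a report during the flush in which it next descends a level. Write $k = s_t$, let $t_1$ be the time of $k$'s first occurrence (so $F_t = t - t_1$), and set $T = \lceil \phi N \rceil$. I would first record three facts about the time-stretch filter. (i) No count is ever decremented (items only move down), so the total of $k$'s counts over all the levels always equals its true running frequency; in particular this total is at least $T$ at every time from $t$ onward, up to the point --- which, one checks, comes after $k$ is reported below --- at which $k$ might be dropped from the last level. (ii) The bins form a FIFO pipeline: whole bins are promoted within a level and then pushed down a level intact, so at every moment a copy of $k$ that entered $\calC_0$ earlier is at a level no smaller than that of any copy that entered $\calC_0$ later. (iii) A flush whose cascade reaches level $j$ rewrites all of $\calC_0, \ldots, \calC_j$, and, by the reporting rule stated just before the lemma, during that flush we add up, for every item occurring in levels $0, \ldots, j$, its counts over those levels and report it if the total exceeds $(\phi - \epsilon)N$; by the flushing schedule these ``depth-$j$ cascades'' recur (in steady state) every $r^j M / q$ stream items, and a copy that enters $\calC_0$ spends between $r^j M/(\alpha+1)$ and $r^j M$ stream items at level $j$ before descending, since level $j$ is split into $q = (\alpha+1)/\alpha$ bins of size $\tfrac{\alpha}{\alpha+1} r^j M$ and the copy needs between $q-1$ and $q$ depth-$j$ cascades to cross it.

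With these in hand, let $m$ be the level of $k$'s oldest copy $x_0$ at time $t$. On one hand, $x_0$ has been in the structure for only $F_t$ stream items but has already descended through levels $0, \ldots, m-1$, so fact (iii) gives $F_t \geq \sum_{i=0}^{m-1} r^i M/(\alpha+1) = \tfrac{M}{\alpha+1}\cdot\tfrac{r^m-1}{r-1}$. On the other hand, consider the first depth-$m$ cascade at or after time $t$. Since $x_0$ can leave level $m$ only during a depth-$m$ cascade, at that moment $x_0$ is still at level $m$ --- or is precisely the bin being pushed out of level $m$, hence still scanned --- and by fact (ii) every other copy of $k$ is at level at most $m$; so this cascade's scan sums all $\geq T$ copies of $k$ and reports $k$ (if it has not been reported already). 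By fact (iii) this cascade happens within one period, i.e.\ within $r^m M / q$ stream items of $t$. Combining the two estimates, $k$ is reported within $O(r^m M) = O(F_t)$ stream items of $t$; the role of choosing exactly $q = (\alpha+1)/\alpha$ bins of size $\tfrac{\alpha}{\alpha+1} r^i M$ is to make these constants line up so that the delay is in fact at most $\alpha F_t$, whence $k$ is reported by time $t + \alpha F_t = t_1 + (1+\alpha)F_t$.

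The step I expect to be the main obstacle is pinning down that last constant exactly. The asymptotic version above leaves multiplicative slack (on the order of $r/(r-1)$) between ``within $O(r^m M)$ of $t$'' and ``$\alpha F_t$,'' so the proof has to be done with exact, non-asymptotic accounting: one tracks which of the $q$ bins of level $m$ holds $x_0$, and how full level $m$'s leading bin is at time $t$, and uses this to tighten the lower bound on $F_t$ and the upper bound on the report delay simultaneously --- these are coupled, e.g.\ if $x_0$ entered level $m$ on a cascade that itself reached level $m$, the delay is $0$ --- before matching them against the schedule's exact period. Two boundary cases also need short separate treatment: $m = 0$, where $x_0$ is still in RAM and $k$ is caught at the next level-$0$ flush; and $m = L-1$, where $x_0$ sits at the bottom and $F_t$ is already $\Omega(r^{L-1} M)$, large enough that even a full-depth cascade lands within $\alpha F_t$.
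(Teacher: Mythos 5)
Your proposal is essentially the paper's own argument: take the deepest level $m$ holding a copy of $s_t$ at time $t$, lower-bound $F_t$ by the time a copy needs to sink that far, upper-bound the delay by the recurrence period of flushes that scan level $m$, and observe that such a flush consolidates all copies (all at levels $\le m$ by the FIFO property) and reports. The one slip --- and it is exactly the source of the ``main obstacle'' you flag --- is the period in your fact (iii): a cascade scans level $m$ whenever level $m-1$ shifts, i.e., whenever $b_1^{m-1}$ fills, which recurs every $\frac{\alpha}{\alpha+1}r^{m-1}M$ stream items, not every $\frac{\alpha}{\alpha+1}r^{m}M$; the quantity $r^jM/q$ is the period at which level $j$ itself shifts, which is $r$ times rarer than the period at which it is scanned. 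With the corrected period the delay after $t$ is at most $\alpha\cdot\frac{r^{m-1}M}{\alpha+1}$, while the level-$(m-1)$ term alone of your own lower bound already gives $F_t\ge \frac{r^{m-1}M}{\alpha+1}$, so the delay is at most $\alpha F_t$ with no bin-position bookkeeping and no $r/(r-1)$ slack; this is precisely how the paper closes the constant (it pairs $F_t\ge r^{\ell-1}M/\alpha$ with a next-flush bound of $r^{\ell-1}M$). Of your two boundary cases, $m=L-1$ needs nothing extra, but $m=0$ is genuinely delicate: if all $\lceil\phi N\rceil$ occurrences arrive before the first level-$0$ flush, the flushing schedule alone does not bound the delay by $\alpha F_t$. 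The paper's proof silently assumes $\ell\ge 1$ here as well; the natural repair is to check the in-RAM level on every insertion, which costs no I/Os.
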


\begin{proof}
In the time-stretch filter, each item inserted at level $i$ waits in
$1/\alpha$ bins until it reaches the last bin,
that is, it waits at least $r^i/\alpha$ flushes (from main memory)
before it is moved down to level $i+1$. This ensures that
items that are placed on a deeper level have aged sufficiently
that we can afford to not see them again for a while.


Consider an item $s_t$  
with flow time $F_t = t - t_1$, where $t$ is a $\phi$-event and $t_1$ is the time step of the
first occurrence of $s_t$. 

Let $\ell \in \{0,1,\ldots, L\}$ be the largest level containing an instance of $s_t$ at time $t$, when
$s_t$ has its $\phi \Ns$th occurrence. The flushing schedule guarantees that
the item $s_t$ must have survived at least $r^{\ell-1}/\alpha$ flushes since it was
first inserted in the data structure. Thus, $r^{\ell-1} M/\alpha \leq F_t$.

Furthermore, level $\ell$ is involved in a flush again
after $t_\ell = r^{\ell-1}M \leq \alpha F_t$ time steps. 
At time $t_{\ell}$ during the flush all counts of the item will be consolidated
to a total count estimate of $\tilde c$. Note that $\ell \leq L$
and the count-estimate error of $s_t$ can be at most $\epsilon N_{t_{\ell}}$, where
$N_{t_\ell}$ is the number of the stream items seen up till $t_\ell$.
Thus, we have that $\phi \Ns \tilde c + \epsilon N_t
\leq \tilde c + \epsilon \Ns$. That is, 
$\tilde c \geq (\phi - \epsilon) \Ns$, which means that $s_t$ gets reported during the flush at time $t_{\ell}$,
which is at most $\alpha F_t$ time steps away from $t_1$. 
\end{proof}

\pparagraph{I/O complexity} 
Next, we analyze the I/O complexity of the time-stretch filter.
We treat each level of the filter as a sorted array.

\begin{theorem}\thmlabel{timestretch}
Given a stream $S$ of size $\Ns$ and parameters $\epsilon$ and $\phi$,
where
$1/N \leq \epsilon < \phi < 1$, the approximate \oedp 
can be solved with time-stretch $1+\alpha$ at an amortized I/O
complexity 
$O(\frac{\alpha+1}{\alpha}(\frac{1}{B} \log \frac{1}{\epsilon M}))$
per stream item.
\end{theorem}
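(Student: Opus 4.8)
The plan is to bound the amortized I/O cost of the time-stretch filter by accounting separately for the cost of flushes (which subsumes the scanning needed to detect events, since all reporting happens during flushes). The key structural facts from the construction are: there are $L = \log_r(1/(\epsilon M))$ levels; level $i$ has size $r^i M$ and is split into $q = (\alpha+1)/\alpha$ bins each of size $\frac{\alpha}{\alpha+1} r^i M$; and a flush involving level $i$ is triggered every $r^{i-1} M$ stream items (equivalently, every $r$ flushes out of level $i-1$'s last bin fill bin $b_1^{i}$).

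First I would analyze the cost of a single flush that involves levels $0, \ldots, i$. Since each level is a sorted array, consolidating counts and pushing items down can be done by scanning (and rewriting) levels $0$ through $i$; the dominant term is the size of level $i$, so this costs $O(r^i M / B)$ I/Os. Crucially, the event-detection scan over the input levels during this flush costs no more than the flush itself, so it does not change the asymptotics. Next, I would charge this cost to the stream items responsible for triggering it: a flush that reaches level $i$ happens once per $r^{i-1} M$ stream items, so the amortized cost per stream item of flushes that reach exactly level $i$ is $O\!\left(\frac{r^i M / B}{r^{i-1} M}\right) = O(r/B)$.

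Then I would sum over all levels $i = 0, \ldots, L-1$ that a given stream item might participate in; each stream item contributes $O(r/B)$ per level it touches and touches at most $L$ levels, giving $O(rL/B) = O\!\left(\frac{r}{B}\log_r \frac{1}{\epsilon M}\right)$, minimized around $r = e$ to give $O\!\left(\frac{1}{B}\log \frac{1}{\epsilon M}\right)$. The remaining subtlety is the bin-shifting overhead: shifting the $q$ bins on level $i$ "over by one" on each fill of $b_1^i$. A naive implementation rewrites all $q$ bins, which multiplies the per-flush cost by $q = (\alpha+1)/\alpha$; this is exactly the source of the $\frac{\alpha+1}{\alpha}$ factor in the bound. (Alternatively, the shift is conceptual — a rotating pointer into the bins — but even charging it explicitly only costs the stated factor.) Combining, the amortized I/O cost is $O\!\left(\frac{\alpha+1}{\alpha}\left(\frac{1}{B}\log\frac{1}{\epsilon M}\right)\right)$, and correctness of reporting is already established by \lemref{ts-correct}.

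The main obstacle I anticipate is making the bin-level accounting airtight: one must verify that a flush out of level $i$ truly occurs only once every $r^{i-1}M$ stream items (so that the geometric-series-style charging across levels does not double count), and that the per-flush scan for event detection genuinely fits within the $O(r^i M/B)$ budget of the flush rather than requiring separate point queries — this is where the design choice of doing all detection during flushes (noted in the text) pays off. A secondary point to handle cleanly is that the flow-time argument in \lemref{ts-correct} guarantees events are caught at a flush involving level $\ell \le L$, so no extra I/O beyond the flush schedule is ever needed for timely reporting. Once the per-level amortized bound $O(r/B)$ is established and multiplied by $L$ levels and the $q$-fold bin-shift factor, the rest is routine optimization of $r$.
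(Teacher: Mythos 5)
Your overall strategy---charge each flush's $O(r^iM/B)$ cost to the stream items that trigger it, sum over the $L$ levels, and set $r=e$---is the same amortization the paper uses (the paper phrases it per item moved down a level rather than per triggering stream item, but the two accountings are equivalent), and your observation that event detection rides along with the flush scans at no extra asymptotic cost is exactly the point of the design. The final bound is right. However, your explanation of where the $\frac{\alpha+1}{\alpha}$ factor comes from does not hold up. The $q=(\alpha+1)/\alpha$ bins on level $i$ each have size $\frac{\alpha}{\alpha+1}r^iM$, so together they \emph{are} the level: rewriting all $q$ bins costs $O(r^iM/B)$, which is precisely the per-flush cost you already budgeted. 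Multiplying that budget by $q$ again is double counting, not a real overhead, so the ``bin-shifting'' step as written would fail the airtight accounting you yourself flag as the main obstacle.

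The factor actually enters through the flush frequency, which you took from the paper's loose statement that every $r^{i-1}M$ elements causes a flush involving level $i$. Tracking the constant: $b_1^i$ has capacity $\frac{\alpha}{\alpha+1}r^iM$ and receives one bin of $\frac{\alpha}{\alpha+1}r^{i-1}M$ items each time $b_1^{i-1}$ fills, so by induction $b_1^{i-1}$ fills once per $\frac{\alpha}{\alpha+1}r^{i-1}M$ stream items---a factor $q$ more often than you assumed. Equivalently, in the paper's per-item view, each flush pays to rewrite an entire level but evicts only one bin's worth of items (an $\frac{\alpha}{\alpha+1}$ fraction of a level), so the cost per item per level is $O\bigl(\frac{\alpha+1}{\alpha}\cdot\frac{r}{B}\bigr)$. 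Your two inaccuracies cancel exactly, which is why you land on the stated bound, but the justification for the $\frac{\alpha+1}{\alpha}$ factor needs to be replaced by this frequency (or eviction-fraction) argument. Correctness deferred to \lemref{ts-correct} is fine.
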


\begin{proof}

	A flush from level $i$ to $i+1$ costs $O(\frac{r^{i+1}M}{B})$ I/Os,
and moves $\frac{\alpha}{\alpha+1} r^i M$ stream items down one level, so the amortized cost
  to move one stream item down one level is~$O(\frac{r^{i+1}M}{B} / \frac{\alpha}{\alpha+1} r^i M) = 
O(\frac{\alpha+1}{\alpha} \frac rB)$~I/Os.
  
  Each stream item can be moved down at most $L$ levels, thus the
  overall amortized I/O cost of an insert is $O(\frac{\alpha+1}{\alpha} \frac {rL}{B}) 
= O\left( \frac{\alpha+1}{\alpha} \frac rB  \log_r \frac{1}{\epsilon M} \right)$, which is minimized at $r = e$.
%
%
\end{proof}

\pparagraph{Exact reporting with time-stretch}
Similar to~\secref{immediate-MG}, if we do not want any false positives among the reported events, we set $\epsilon = 1/\Ns$.
The cost of error-free reporting is that we have to store all the items,
which increases the number of levels and thus the I/O cost.
In particular, we have the following result on \oedp.

\begin{corollary}\corlabel{timestretch-noerror}
Given $\alpha>0$ and a stream $S$ of size $\Ns$, the \oedp can be
  solved with time stretch $1+\alpha$ at an amortized cost of
  $O\left(\frac{\alpha+1}{\alpha}\frac{\log (N/M)}{B}\right)$ I/Os per stream item.
\end{corollary}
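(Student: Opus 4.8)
The plan is to obtain Corollary~\ref{cor:timestretch-noerror} as the $\epsilon = 1/N$ specialization of Theorem~\ref{thm:timestretch}, together with the observation that exactness costs us nothing in correctness and only a bounded blow-up in the number of levels. First I would check that the approximate \oedp with $\epsilon = 1/N$ is exactly the \oedp: the only items the approximate problem permits-but-does-not-require to be reported are those whose count lies strictly between $(\phi-\epsilon)N = \phi N - 1$ and $\phi N$, and since counts are integers there is no such item. Hence with $\epsilon = 1/N$ the reported set is precisely the set of $\phi$-events (no false positives). The timing guarantee is already in hand: Lemma~\ref{lem:ts-correct} shows every $\phi$-event $s_t$ is reported by time $t + \alpha F_t$, and a look at its proof confirms the flow-time argument never needs $\phi$ to exceed any function of $M$ — so it applies for arbitrarily small $\phi$, unlike the zero-delay results of \secref{immediate-MG}, which required $\phi > 1/M + \Omega(1/N)$.

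Next I would substitute $\epsilon = 1/N$ into the I/O bound of Theorem~\ref{thm:timestretch}. The number of levels becomes $L = \log_r(1/(\epsilon M)) = \log_r(N/M)$; this is precisely the price of error-free reporting, since the last level must have size at least $1/\epsilon = N$ and thus can hold the whole stream, so the cascade grows to $\log_r(N/M)$ levels. The amortized insertion cost from Theorem~\ref{thm:timestretch} is $O\!\left(\frac{\alpha+1}{\alpha}\,\frac{r}{B}\log_r\frac{1}{\epsilon M}\right)$, which at $\epsilon = 1/N$ and the optimal $r = e$ is $O\!\left(\frac{\alpha+1}{\alpha}\,\frac{\log(N/M)}{B}\right)$. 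Because all reporting in the time-stretch filter happens during flushes and no point query is ever issued, there is no extra query term to add (in contrast to Corollary~\ref{cor:immediate-noerrors}); the insertion cost is the total cost, which gives the claimed bound.

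There is no genuine obstacle here — the substance is already carried by Lemma~\ref{lem:ts-correct} and Theorem~\ref{thm:timestretch}. The one point worth stating explicitly in the writeup is the contrast with the immediate-reporting corollaries: allowing a $(1+\alpha)$ time stretch lets the filter postpone a report until the next flush that consolidates all copies of an item onto a single level, so it never needs the in-memory level $\calC_0$ to be accurate enough to sparsify queries, and the $\phi > 1/M$ restriction disappears. I expect the whole proof to be two or three sentences: invoke Lemma~\ref{lem:ts-correct} for correctness, invoke Theorem~\ref{thm:timestretch} with $\epsilon = 1/N$ for the I/O count, and note $\log(1/(\epsilon M)) = \log(N/M)$.
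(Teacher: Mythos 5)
Your proposal is correct and follows exactly the paper's route: the paper obtains this corollary by setting $\epsilon = 1/\Ns$ in Theorem~\ref{thm:timestretch} (so the number of levels becomes $\log_r(N/M)$), with correctness already supplied by Lemma~\ref{lem:ts-correct}. Your additional observations---that integrality of counts makes $\epsilon = 1/N$ eliminate all false positives, and that the absence of point queries means no extra query term appears---are sound and merely make explicit what the paper leaves implicit.
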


\pparagraph{Summary} By allowing a little delay, we can solve the
timely event-detection problem at the same asymptotic cost as simply
indexing our
data~\cite{BrodalFa03b,BenderFaJa15,BenderFaFi07,BrodalDeFi10,
  BuchsbaumGoVe00,BenderFa17}.  

Recall that in the online solution the increments and decrements 
of the MG algorithm determined the flushes from one level
to the other. In contrast, 
these flushing decisions in the time-stretch solution were based entirely on the age of the items. The MG style count estimates
came essentially for free from the size and cascading nature of the levels.
%
Thus, we get different
reporting guarantees depending on whether we flush based on age or count.  

Finally, our results on \oedp and \oedp with time stretch 
show that there is a spectrum between completely online and completely
offline, and it is tunable with little I/O cost.


    




\section{Power-Law Distributions}\seclabel{powerlaw}

In this section, we present a data structure that
solves the \oedp on streams where the count of items follow a power-law distribution.
There is no assumption on the order of arrivals, which can be adversarial. 
In contrast to worst-case count distributions, 
our data structure for power-law inputs can support smaller reporting thresholds and achieve 
better I/O performance.


We note that previous work has analyzed the performance of Misra-Gries
style algorithms on similar input distributions. 
In particular, Berinde et al.~\cite{BerindeIn10}
consider streams where the item counts follow a Zipfian distribution,
the assumptions of which are similar but distinct from power-law.

Next, we briefly review the distinction and relationship between Zipfian
and power-law distributions. This will allow us to compare Berinde et al.'s result to our work.
For detailed review of these distributions, see~\cite{newman2005power,
ClausetShNe09, BreslauCa99, adamic2008zipf}.

%
%

\pparagraph{Zipfian vs. power-law distributions}
Let $f_1, \ldots, f_{u}$ be the ranked-frequency vector, that is, $f_1 \geq f_2
\geq \ldots \geq f_{u}$ of $u$ distinct items in a
stream of size $N$, where $u=|\calU|$. 
The item counts in the stream
follow a
\defn{Zipfian distribution} with exponent $\alpha> 0$
if frequency $f_i =\mathcal{Z} \cdot i^{-\alpha}$, where
$\mathcal{Z}$ is the normalization constant. In contrast, the item counts in the stream
follow a power-law distribution with exponent $\theta >1$ 
if the probability that an item has count $c$ is equal to ${Z} \cdot c^{-\theta}$, 
where $Z$ is the normalization constant.

An stream follows a Zipfian distribution with exponent $\alpha$ if and only if
it follows a power-law distribution with exponent $\theta = 1+1/\alpha$; see~\cite{adamic2008zipf} for
details on this conversion.

Berinde et al.~\cite{BerindeIn10} show that if the item counts
in the stream follow a Zipfian distribution with $\alpha>1$, 
then the MG algorithm can solve the $\epsilon$-approximate heavy
hitter problem using only $\epsilon^{-1/\alpha}$ words. 
Alternatively, on such Zipfian distributions, the MG algorithm achieves an improved error bound $\epsilon^{\alpha}$ using $1/\epsilon$ words. Since all our algorithm so far use the MG algorithm as
a building block, we automatically achieve these improved bounds for Zipf exponents $\alpha >1$ (that is,
power-law exponents $\theta \leq 2$).


However, many common power-law distributions found in nature
have $2 \leq \theta \leq 3$~\cite{newman2005power}.

In this section, we design a new external-memory data structure for the \oedp
with improved guarantees when the power-law exponent $\theta \geq 2 + 1/(\log_2 N/M)$.
%


\pparagraph{Preliminaries}
We use the continuous power-law definition\cite{newman2005power}:
the count of an item with a {power-law distribution}
has a probability $p(x) \,dx$ of taking a value in the
interval from $x$ to $x+ dx$, where $p(x) = Z \cdot x^{-\theta}$,
where $\theta>1$ and $Z$ is the normalization constant. 

In general, the power-law distribution on $x$ may
hold above some minimum value $c_{\min}$ of $x$.  For simplicity, we let
$c_{\min}=1$. The normalization constant $Z$ is calculated as follows.

\begin{align*}
 1= \int_{1}^{\infty}p(x)\,dx = Z \int_{1}^{\infty} x^{-\theta} \, dx = \frac{Z}{\theta-1} \Big[\frac{-1}{x^{\theta-1}} \Big]_{1}^{\infty} = \frac{Z}{\theta-1}. 
\end{align*}
Thus, $Z = (\theta-1)$.\footnote{In principle, one could have power-law distributions with $\theta<1$, but these distributions cannot be normalized 
and are not common~\cite{newman2005power}.} 
%
%
We will use the cumulative distribution of a power law, that is,
\begin{align}
\mbox{Prob } (x>c) &=\int_{j=c}^{\infty} \mbox{Prob } (x=c) =  \int_{j=c}^{\infty} (\theta-1) x^{-\theta} dx \nonumber\\ \notag
&= \big[- x^{-\theta+1} \big]_{c}^{\infty} = \frac{1}{c^{\theta-1}}. \stepcounter{equation}\tag{\theequation}\label{eq:cumul}
\end{align}


\subsection{Power-law filter}
First, we present the layout of our data structure, the \defn{power-law filter} and then we 
present its main algorithm, the shuffle merge, and finally
we analyze its performance.
 
\pparagraph{Layout} 
The power-law filter consists of a cascade of Misra-Gries tables, where $M$ is
the size of the table in RAM and there are $L = \log_r (2/\epsilon M)$ levels on disk,
where the size of level $i$  is $2/(r^{L-i}\epsilon)$.

Each level on disk has an \emph{explicit upper
bound} on the number of instances of an item that can be stored on that level.
This is different from the MG algorithm, where this upper bound is implicit: based on the level's size. In particular,
each level $i$ in the power-law filter has a \defn{level threshold} $\tau_i$ for $1 \leq i \leq L$,
($\tau_1 \ge  \tau_2  \ge  \ldots \ge \tau_L$), 
indicating that the maximum count on level $i$ can be $\tau_{i}$.

\pparagraph{Threshold invariant} We maintain the
invariant that at most $\tau_i$ instances of an item can be stored on level $i$. 
Later, we show how to set $\tau_i$'s based on the item-count distribution.

\pparagraph{Shuffle merge} The external-memory MG data structure and time-stretch filter use two different flushing strategies, and here we 
present a third for the power-law filter.


The level in RAM receives inputs from the stream one at a time. When attempting to insert to a level $i$ that is at capacity, instead
of flushing items to the next level, we find the smallest level $j > i$, which has enough empty space to hold all items from levels $0, 1, \ldots, i$. 
We aggregate the count of each item $k$ on levels $0, \ldots, j$, resulting
in a consolidated count $c_k^j$. 
If $c_k^j\geq (\phi - \epsilon) \Ns$, we report $k$. Otherwise, we pack instances
of $k$ in a bottom-up fashion on levels $j, \ldots, 0$%
, while maintaining
the threshold invariants. In particular,
we place $\min \{c_k^j, \tau_j \}$ instances of $k$ on level $j$,
and 
$\min \{c_k^j - (\sum_{\ell = y+1}^{j} \tau_y), \tau_y\}$ instances of $k$ on level $y$ for
$0 \leq y \leq j-1$.

Thus, the threshold invariant prevents us from flushing too many counts of an item
downstream. As a result, items get \defn{pinned}, that is, they cannot be flushed
out of a level. Specifically, we say an item is \defn{pinned at level $\ell$} if its
count exceeds $\sum_{i = L}^{\ell+1} \tau_i$. 

Too many pinned items at a level can clog the data structure. 
In~\lemref{powerlaw-noclog}, we show that if the item counts in the stream follow a
power-law distribution with exponent $\theta$, we can set the thresholds
based on $\theta$ in a way that no level has
too  many pinned items. 

\pparagraph{Online event detection}
As soon as the count of an item $k$ in RAM (level $0$) reaches a threshold of
$\phi \Ns - 2\tau_1$, the data structure triggers a sweep of all the $L$ levels, consolidating the count estimates
of $k$ at all levels. If the consolidated count reaches $(\phi  - \epsilon)\Ns$, we report $k$; otherwise we update the $k$'s consolidated count in RAM and ``pin'' $k$ in RAM,
that is, mark a bit to ensure $k$ does not participate in future shuffle merges. 
Reported items are remembered, so that each event gets reported exactly once.

\pparagraph{Setting thresholds} We now show how to set the level thresholds based
on the power-law exponent so that the data structure does not get ``clogged'' even though the high-frequency items 
are being sent to higher levels of the data structure.

\begin{lemma}\lemlabel{powerlaw-noclog} Let the item counts in an stream $S$
of size $\Ns$ be drawn from a power-law distribution
with exponent $\theta > 1$.  Let $\tau_{i} = r^{\frac{1}{\theta-1}} \tau_{i+1}$ for $1 \leq i \leq L-1$
and $\tau_L = (r\epsilon \Ns)^{\frac {1}{\theta-1}}$.
Then the number of keys pinned at any level $i$ is at most half its size, i.e., $1/(r^{L-i}\epsilon)$. 
\end{lemma}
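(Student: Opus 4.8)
The plan is to reduce the lemma to a counting statement about item frequencies and then apply the power-law tail formula \eqref{cumul}. First I would unroll the threshold recurrence: from $\tau_j = r^{1/(\theta-1)}\tau_{j+1}$ and $\tau_L = (r\epsilon\Ns)^{1/(\theta-1)}$, a one-line induction gives $\tau_j = (r^{L-j+1}\epsilon\Ns)^{1/(\theta-1)}$ for all $1\le j\le L$, so in particular $\tau_{i+1}^{\theta-1} = r^{L-i}\epsilon\Ns$. By definition an item is pinned at level $i$ only if its stream count exceeds $\sum_{j=i+1}^{L}\tau_j$; since the $\tau_j$ are decreasing in $j$, this sum is at least its largest term $\tau_{i+1}$ (in fact it is a geometric-type sum with ratio $r^{1/(\theta-1)}>1$ and so is within a constant factor of $\tau_{i+1}$, but only the lower bound is needed). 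Hence the number of items pinned at level $i$ is at most the number of distinct items whose count exceeds $\tau_{i+1}$.

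Next I would bound that number using the power-law assumption. Modeling the count histogram as following $p(x)=(\theta-1)x^{-\theta}$, the cumulative formula \eqref{cumul} says that a fraction $\tau_{i+1}^{-(\theta-1)} = 1/(r^{L-i}\epsilon\Ns)$ of the $u=|\calU|$ distinct items have count exceeding $\tau_{i+1}$, so the number of items pinned at level $i$ is at most $u/(r^{L-i}\epsilon\Ns)$. Since every item occurs at least $c_{\min}=1$ time we have $u\le \Ns$, hence this is at most $1/(r^{L-i}\epsilon)$, which is exactly half the size $2/(r^{L-i}\epsilon)$ of level $i$, as claimed.

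If one instead reads the model literally as ``each of the $u$ item counts is an independent sample from $p$,'' the number of items pinned at level $i$ is a sum of $u$ independent indicators with mean at most $1/(r^{L-i}\epsilon)$; since these means are $\Omega(M)$, a Chernoff bound together with a union bound over the $L$ levels recovers the statement for all levels simultaneously with high probability. I expect this translation from an expected (or empirical) count to a worst-case guarantee --- i.e., pinning down exactly what ``at most half its size'' should mean given the randomness in the model --- to be the only real point of care; the rest is the routine algebra above plus a single invocation of \eqref{cumul}. Two bookkeeping remarks to handle along the way: since an item's count only grows over the stream it suffices to verify the bound for final counts, which is what the argument does; and the counts the filter actually stores are cascaded Misra-Gries estimates that never exceed the true frequencies, so the set of items occupying level $i$ at any time is contained in the set of items with true count exceeding $\sum_{j=i+1}^{L}\tau_j$, keeping the bound valid throughout the execution.
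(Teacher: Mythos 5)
Your proof is correct and follows essentially the same route as the paper's: bound the pinning threshold at level $i$ from below by its largest summand $\tau_{i+1}$, unroll the recurrence to get $\tau_{i+1}^{\theta-1} = r^{L-i}\epsilon \Ns$, and apply the cumulative tail formula \eqref{cumul}; the paper merely packages the unrolling as an induction over levels, so the computation is identical. Your extra care about the distinction between the number of distinct items $u$ and $\Ns$, and about expected versus realized counts (the paper's own proof only bounds the expectation even though the lemma is stated deterministically), goes slightly beyond what the paper does and is a fair point, not a defect in your argument.
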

\begin{proof}
We prove by induction on the
number of levels.  We start at level $L-1$.  An item is placed at level $L-1$ if its count is greater than
$\tau_L = (r\epsilon \Ns)^{\frac {1}{\theta-1}}$.  
By Equation~(\ref{eq:cumul}), there can be at most $\Ns/ \tau_L^{\theta-1}= 
\Ns/ (r \epsilon \Ns) = 1/r \epsilon$
such items which proves the base case.

Now suppose the lemma holds for level $i+1$. We show that it holds for
level $i$.  An item gets pinned at level ${i+1}$ if its count is
greater than $\sum_{\ell=L}^{i+2} \tau_{\ell}$. 

Using Equation~(\ref{eq:cumul}) again, the expected number of such items 
is  
\[ \leq \frac{\Ns}{{(\sum_{\ell=L}^{i+2} \tau_{\ell})}^{\theta-1}}  < \frac{{\Ns}}{{\tau_{i+2}}^{\theta-1}}.\]
By the induction hypothesis, this is at most half the size of
level $i+1$, that is,  
\[ \frac{{\Ns}}{{\tau_{i+2}}^{\theta-1}} \leq  \frac{1}{\epsilon {r^{L-i-1}}}.\]

Using this, we prove that the expected number of items pinned at level $i$ is at most $1/(r^{L-i}\epsilon$.

The expected number of pinned items at level $i$ is
\begin{align*}
\frac{\Ns}{{(\sum_{\ell=L}^{i+1}\tau_{\ell})}^{\theta-1}} &< \frac{\Ns}{({{\tau_{i+1}}^{\theta-1}})}= \frac{\Ns}{{{(r^{1/\theta-1} \cdot \tau_{i+1})}^{\theta-1}}} \\
&=  \frac 1r \cdot \frac{\Ns}{({{
\tau_{i+2}}^{\theta-1}})} \leq \frac{1}{ r \epsilon {r^{L-i-1}}} = \frac{1}{r^{L-i}\epsilon}.
\end{align*} 
\end{proof}

\subsection{Analysis}
Next, we prove correctness of the power-law filter
and analyze its I/O complexity.

We first establish notation.
Let $S$ be the stream of size $N$ where
the count of items follow a power-law
distribution with exponent $\theta >1$.
For simplification we use
 $\gamma = 2 ({N}/{M})^{\frac{1}{\theta-1}}$
in the analysis. 

\pparagraph{Correctness}
Next, we prove that the power-law filter reports all $\phi$-events
as soon as they occur. In the approximate \oedp, it may report
false positives, that is, items with frequency between
$(\phi - \epsilon)N$ and $\phi N$. As before, for error-free
reporting we set $\epsilon = 1/N$.

\begin{lemma}
The power-law filter solves the approximate \oedp on $S$.
\end{lemma}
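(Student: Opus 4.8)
The plan is to verify the two defining requirements of the approximate \oedp: (1) every $\phi$-event $s_t$ is reported at time $t$ (no false negatives, online), and (2) no item with frequency at most $(\phi-\epsilon)N$ is ever reported (bounded false positives). I would structure the argument around the two reporting mechanisms in the filter: the online sweep triggered when an item's RAM count reaches $\phi N - 2\tau_1$, and the report check performed during each shuffle merge.

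First I would establish a \emph{count-conservation invariant}: for every item $k$, the sum of the counts of all instances of $k$ stored across levels $0,\ldots,L$ equals the true count $f_k^{(t)}$ of $k$ in the stream seen so far, \emph{unless} $k$ has been pinned in RAM, in which case the RAM entry holds the consolidated count at the time of pinning plus all subsequent occurrences. This is immediate by induction over the operations: an insertion of $k$ increments its RAM count by one; a shuffle merge re-packs instances of $k$ bottom-up while preserving their total; the threshold invariant only controls \emph{where} instances sit, not how many there are. Crucially, no decrement step of the Misra-Gries flavor is ever applied here — the filter never discards counts, it only relocates them subject to the $\tau_i$ bounds and the no-clogging guarantee of \lemref{powerlaw-noclog}. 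This gives exactness of the consolidated count whenever a sweep is performed.

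Next I would argue online detection. When $s_t$ is a $\phi$-event, $f_{s_t}^{(t)} = \lceil\phi N\rceil$. I claim its RAM count must have reached $\phi N - 2\tau_1$ at or before time $t$: since level $0$ (RAM) can hold at most $M$ total instances and a given item sits only in RAM once its count exceeds $\sum_{i=L}^{2}\tau_i$ worth of downstream capacity, the RAM count of $s_t$ is at least $f_{s_t}^{(t)}$ minus the total downstream threshold capacity, which is at most $2\tau_1$ by the geometric decay $\tau_i = r^{1/(\theta-1)}\tau_{i+1}$ (so $\sum_{i\ge 2}\tau_i < \tau_1 \cdot \frac{1}{r^{1/(\theta-1)}-1} \le \tau_1$ for suitable $r$, and one accounts for rounding with the factor $2$). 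Hence the sweep is triggered no later than step $t$; the sweep consolidates the exact count $f_{s_t}^{(t)} = \lceil \phi N\rceil \ge (\phi-\epsilon)N$, so $s_t$ is reported at time $t$. For the converse (no false positives), a report only fires when the consolidated count $c_k^j$ (during a shuffle merge) or the swept count (during an online check) is $\ge (\phi-\epsilon)N$; by the conservation invariant this consolidated value is exactly $f_k$ (the true count so far), so only items whose true count has reached $(\phi-\epsilon)N$ are reported, which is precisely what the approximate \oedp permits. The ``reported items are remembered'' bookkeeping rules out duplicates.

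The main obstacle I anticipate is the interaction between \emph{pinning} and count-conservation: once an item is pinned in RAM (via the online check with count below threshold) it no longer participates in shuffle merges, so I must make sure its RAM entry is updated to the full consolidated count at pin time and thereafter tracks all new occurrences, and that a later sweep of this already-pinned item still sees the correct total. I would handle this by treating a pinned item's RAM count as an accumulator and checking that the trigger condition $\phi N - 2\tau_1$ and the report condition $(\phi-\epsilon)N$ remain correctly monotone for it. A secondary subtlety is that \lemref{powerlaw-noclog} guarantees the no-clogging property only \emph{in expectation}, so strictly speaking the correctness claim is probabilistic (or holds under the stated distributional assumption with the thresholds set as in \lemref{powerlaw-noclog}); I would state the lemma's conclusion conditioned on the filter being well-formed, i.e., on every attempted shuffle merge finding a level $j$ with enough room, which is exactly what \lemref{powerlaw-noclog} provides. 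Everything else — the geometric sums bounding $2\tau_1$, the rounding in $\lceil \phi N\rceil$ — is routine.
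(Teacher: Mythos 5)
Your overall skeleton matches the paper's: bound the number of instances of any one item that can live on disk by $\sum_{\ell=1}^{L}\tau_\ell < 2\tau_1 = \gamma$, conclude that when an item's true count reaches $\phi\Ns$ its RAM count has reached $\phi\Ns - 2\tau_1$, which is exactly the sweep trigger, and then check that the consolidated count clears the $(\phi-\epsilon)\Ns$ reporting bar. Your added worries about the pinned-item accumulator and about \lemref{powerlaw-noclog} holding only in expectation are legitimate points the paper glosses over.

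However, your central \emph{count-conservation invariant} --- that the consolidated count equals the true frequency $f_k$ exactly because ``no decrement step of the Misra-Gries flavor is ever applied'' and ``the filter never discards counts'' --- is a genuine gap for the approximate \oedp. The power-law filter is explicitly a cascade of Misra-Gries tables whose total capacity is only $O(1/\epsilon)$ slots (the last level has size $2/\epsilon$ and the levels above it form a geometric series), whereas a power-law stream can contain $\Theta(N)$ distinct items. For $\epsilon \gg 1/N$ the structure simply cannot retain one slot per distinct item, so counts \emph{must} be shed MG-style; exact conservation is impossible. This is precisely why the paper's proof does not claim exactness: it only claims the consolidated count undercounts by at most $\epsilon\Ns$, and that is the reason the reporting threshold is $(\phi-\epsilon)\Ns$ rather than $\phi\Ns$ in the first place. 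Your argument that every $\phi$-event is caught therefore needs the MG error bound (true count minus consolidated count at most $\epsilon\Ns$, inherited from the level sizes as in the external-memory MG lemma), not exactness; and your no-false-positive direction should be restated as ``the consolidated count never \emph{over}counts'' (counts are only incremented on genuine occurrences), which is the half of the invariant that does survive. With those substitutions your proof closes and coincides with the paper's; as written, the exactness claim is false in the regime the lemma is actually about.
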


\begin{proof}
Let $\tilde{c}_i$ denote the count estimate of an item $i$ in RAM in the power-law
filter. Let $f_i$ be the frequency of $i$ in the stream. 
Since at most $\sum_{\ell = L}^1 \tau_\ell < 2\tau_1$
instances of a key can be stored on disk, we have that: $\tilde{c}_i \leq f_i \leq \tilde{c}_i + 2 \tau_1$.

Suppose item $s_t$ reaches the threshold $\phi \Ns$ at time $t$, then its
count estimate $s_t$ in RAM must be at least $\tilde{c}_i \geq \phi \Ns - 2\tau_1 = \phi N- 2r^{L/\theta-1} (\epsilon N)^{1/\theta-1}
= \phi N- 2(N/M)^{\frac{1}{\theta-1}} = \phi N  - \gamma$.
This is exactly when we trigger a sweep of the 
data structure consolidating the count of $s_t$ across all $L$ levels;
if the consolidated count reaches $(\phi N - \epsilon) \Ns$, we report it. This proves
correctness as the consolidated count can 
have an error of at most $\epsilon \Ns$.
\end{proof}

\pparagraph{I/O complexity}
We now analyze the I/O complexity of the power-law filter.
Similar to~\secref{immediate-MG}, we assume each level is implemented as a
B-tree, although the same basic algorithm works with sorted arrays
(included with fractional cascading from one level to the next, similar to
cache-oblivious lookahead arrays~\cite{BenderFaFi07}).

\begin{theorem}\thmlabel{power-law}
  Let $S$ be a stream of size $\Ns$ where the count of items follow  a power-law distribution with exponent $\theta >1$.  Let
  $\gamma = 2 \bigl(\frac{N}{M}\bigr)^{\frac{1}{\theta-1}}$.  
Given $S$, $\epsilon$ and $\phi$,
such that $1/\Ns \leq \epsilon <\phi$ and $\phi = \Omega(\gamma/N)$, the
  approximate \oedp can be solved at an amortized I/O
  complexity
  $O\! \left( \left(\frac{1}{B} + \frac{1}{{(\phi N - \gamma)}^{\theta-1}} \right)  \log \frac {1}{\epsilon M} \right)$ per stream item.
\end{theorem}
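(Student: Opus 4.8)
The plan is to follow the accounting in the proof of \thmref{immediate-MG}: bound the amortized cost of the shuffle merges (the ``insertions'') and of the sweeps (the ``queries'') separately, then add. Correctness of the power-law filter is already established, so the entire argument is an I/O analysis, and the one nonroutine ingredient is \lemref{powerlaw-noclog}, which controls both costs using the power-law shape of the input.

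\emph{Shuffle merges.} By \lemref{powerlaw-noclog}, with the $\tau_i$ set as prescribed the expected number of pinned keys on every level $i$ is at most half of $\mathrm{size}(i) = 2/(r^{L-i}\epsilon)$, while consecutive level sizes grow by a factor $r$. Hence, whenever a shuffle merge is triggered at level $i$, some target level $j$ within $O(1)$ of $i$ (for $r$ a suitable constant, e.g. $r\ge 3$, one checks $j=i+1$ always works) has enough free slots to absorb the $O(\mathrm{size}(i))$ keys residing in levels $0,\dots,i$; the bottom-up repack then redistributes these keys over levels $0,\dots,j$ while respecting every level's count threshold $\tau_\ell$ and slot capacity. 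Rebuilding levels $0,\dots,j$ costs $O(\mathrm{size}(j)/B) = O(r^{\,j-i}\mathrm{size}(i)/B)$ I/Os and moves $\Theta(\mathrm{size}(i))$ stream items down at least one level, so moving one stream item down one level costs $O(r/B)$ amortized. A stream item descends through at most $L = \log_r(2/(\epsilon M))$ levels, so the amortized insertion cost is $O(rL/B)$, which for any constant $r>1$ (minimized at $r=e$) is $O\!\left(\frac1B\log\frac1{\epsilon M}\right)$ --- the first term. This is exactly the cascade analysis used for the external-memory Misra--Gries structure in \secref{external-MG-subsec}; the new point is that \lemref{powerlaw-noclog} lets us treat the high-count pinned keys as occupying only a constant fraction of each level, so that the descent never stalls.

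\emph{Sweeps.} A sweep is performed only when a key's in-RAM count first reaches the threshold $\phi\Ns-2\tau_1$, and after an unreported sweep that key is pinned in RAM holding its consolidated count, so each key triggers at most one sweep. Since the in-RAM count never exceeds a key's true frequency, a sweep is triggered only for keys whose total count reaches $\phi\Ns-2\tau_1=\Theta(\phi\Ns-\gamma)$ (using $2\tau_1=\Theta(\gamma)$ and the hypothesis $\phi=\Omega(\gamma/\Ns)$, so both quantities are $\Theta(\phi\Ns)>0$); by the cumulative power law, Equation~\eqreff{cumul}, together with the bound $\Ns$ on the number of distinct items, at most $O\!\left(\Ns/(\phi\Ns-\gamma)^{\theta-1}\right)$ sweeps occur in expectation. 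Each sweep touches all $L$ levels and, assuming --- as in \secref{immediate-MG} --- that the levels are laid out (e.g., as sorted arrays with fractional cascading) so that a point query across all of them costs $O(L)$ I/Os, the total sweep cost is $O\!\left(L\cdot\Ns/(\phi\Ns-\gamma)^{\theta-1}\right)$, i.e. $O\!\left(\frac1{(\phi\Ns-\gamma)^{\theta-1}}\log\frac1{\epsilon M}\right)$ amortized over the $\Ns$ stream items --- the second term. Adding the two costs gives the claimed bound.

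\emph{Where the work is.} I expect the delicate part to be the shuffle-merge analysis: making rigorous that a shuffle merge never cascades more than $O(1)$ levels, and that the bottom-up repack can always be completed within every level's count threshold and slot capacity simultaneously. This reduces to showing that after a merge the target level retains enough slack for the next merge, which hinges on the geometric level growth meshing with the per-level pin bounds of \lemref{powerlaw-noclog}; and since that lemma bounds pinned keys only in expectation, the resulting insertion guarantee --- hence the theorem --- is an expected amortized bound. The sweep count, by contrast, is easy to pin down once one observes that pinning a heavy key in RAM limits it to a single sweep.
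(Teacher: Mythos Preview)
Your proposal is correct and follows essentially the same two-part accounting as the paper: bound the shuffle-merge cost via \lemref{powerlaw-noclog} (a constant fraction of each level can always be flushed, giving $O(rL/B)$ amortized, minimized at $r=e$), and bound the sweep cost by observing that each key is swept at most once, that only keys whose count reaches $\phi\Ns-\gamma$ are swept, and that the power-law tail~\eqreff{cumul} limits their number to $\Ns/(\phi\Ns-\gamma)^{\theta-1}$. If anything you are more careful than the paper's own argument, which glosses over the cascade-depth and expected-versus-worst-case issues you flag in your final paragraph.
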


\begin{proof}
The insertions cost $O(rL/B)  = O((1/r) \log_r (1/\epsilon M))$
as we are always able to flush out a constant fraction of a level during
a shuffle merge using~\lemref{powerlaw-noclog}. This cost is minimized at $r=e$.

Since we perform at most one query each time an item's count in
  RAM reaches $(\phi N - \gamma)$. The total number
  of items in the stream with count at least $(\phi N - \gamma)$ is at most $\Ns / {(\phi N - \gamma)}^{\theta-1}$.  
Since each query costs $O(\log
  (1/\epsilon M))$ I/Os, the overall amortized I/O complexity of the
  queries is $O\! \left( \frac{1}{{(\phi N - \gamma)}^{\theta-1}} 
  \log {\frac{1}{\epsilon M}} \right)$.
%
\end{proof}

\pparagraph{Exact reporting}
To forbid false positives, we set $\epsilon = 1/\Ns$ and get the following
corollary.

\begin{corollary}\corlabel{oedp-power-law}
  Let $S$ be a stream of size $\Ns$ where the count of items follow a power-law distribution with exponent $\theta > 1$.  Let
  $\gamma = 2 \bigl(\frac{N}{M}\bigr)^{\frac{1}{\theta-1}}$.  Given 
$\phi = \Omega(\gamma/N)$, the \oedp can be solved at an amortized I/O
  complexity
  $O\! \left( \left(\frac{1}{B} + \frac{1}{{(\phi N - \gamma)}^{\theta-1}} \right) \log \frac {N}{ M} \right)$ per stream item.
\end{corollary}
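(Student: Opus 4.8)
The plan is to obtain this corollary directly from \thmref{power-law} by substituting $\epsilon = 1/\Ns$; almost all of the work is already done there, so the proof amounts to a parameter check plus an argument that exactness is preserved. First I would confirm that the hypotheses of \thmref{power-law} still hold for this choice of $\epsilon$. The requirement $1/\Ns \le \epsilon < \phi$ reduces to $\phi > 1/\Ns$, and since $\Ns \ge M$ we have $\gamma = 2(\Ns/M)^{1/(\theta-1)} \ge 2$, so the assumption $\phi = \Omega(\gamma/\Ns)$ --- which, as in the theorem, we read as $\phi\Ns$ exceeding $\gamma$ by at least a constant factor so that $(\phi\Ns - \gamma)^{\theta-1}$ is a genuinely positive quantity --- already forces $\phi\Ns = \Omega(\gamma) = \Omega(1)$ and hence $\phi > 1/\Ns$. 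Note also that $\gamma$ does not depend on $\epsilon$, so it is unchanged under the substitution, and the power-law filter (and \lemref{powerlaw-noclog}, which sets the level thresholds $\tau_i$) remains well defined.

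Next I would argue that forbidding false positives costs nothing beyond this change of $\epsilon$. With $\epsilon = 1/\Ns$, the approximate-\oedp guarantee of \thmref{power-law} states that every $\phi$-event is reported online, that no item of frequency at most $(\phi-\epsilon)\Ns = \phi\Ns - 1$ is ever reported, and that the only items whose reporting is discretionary are those with frequency strictly between $\phi\Ns - 1$ and $\phi\Ns$. Since item frequencies are integers (and the reporting threshold is $T = \lceil\phi\Ns\rceil$), this discretionary band contains no admissible frequency, so the filter reports \emph{exactly} the set of $\phi$-events, online and with no false positives --- i.e., it solves the \oedp.

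Finally I would plug $\epsilon = 1/\Ns$ into the I/O bound of \thmref{power-law}. The parameter $\epsilon$ enters the cost only through the number of levels $L = \log_r(2/(\epsilon M))$, which becomes $L = \log_r(2\Ns/M) = \Theta(\log(\Ns/M))$ at the optimal choice $r = e$, so the factor $\log(1/(\epsilon M))$ becomes $\log(\Ns/M)$; the query count $\Ns/(\phi\Ns-\gamma)^{\theta-1}$ and the amortized per-flush accounting are untouched, so the amortized cost becomes $O\!\left(\left(\frac 1B + \frac{1}{(\phi\Ns-\gamma)^{\theta-1}}\right)\log\frac{\Ns}{M}\right)$ I/Os per stream item, as claimed. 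The only real subtlety --- and it is mild --- is the precondition bookkeeping in the first step: one must check that $\phi = \Omega(\gamma/\Ns)$ is strong enough both to keep $\phi\Ns - \gamma$ bounded away from $0$ and to imply $\phi > 1/\Ns$. The substantive content --- the no-clog property of \lemref{powerlaw-noclog}, which lets every shuffle merge evacuate a constant fraction of a level, and the power-law CDF bound \eqref{cumul}, which caps the number of triggered sweeps --- is inherited wholesale from the proof of \thmref{power-law}.
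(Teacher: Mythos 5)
Your proposal is correct and matches the paper's approach exactly: the paper obtains this corollary by setting $\epsilon = 1/\Ns$ in \thmref{power-law}, so that the $\log\frac{1}{\epsilon M}$ factor becomes $\log\frac{\Ns}{M}$ and the discretionary reporting band collapses. Your additional bookkeeping (checking $\phi = \Omega(\gamma/\Ns)$ subsumes $\phi > 1/\Ns$, and that $\gamma$ is independent of $\epsilon$) is a correct filling-in of details the paper leaves implicit.
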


\pparagraph{Remark on scalability}
Notice that the power-filter on an stream with a power-law distribution allows for strictly smaller thresholds $\phi$ compared to~\thmref{immediate-MG}
and~\corref{immediate-noerrors}
on worst-case-distributions, when $\theta > 2 + 1/(\log_2 (N/M))$.
Recall that we need $\phi \geq \Omega(1/M)$ for solving \oedp on worst-case streams.
In contrast, in~\thmref{power-law} and~\corref{oedp-power-law}, we need $\phi \geq \Omega(\gamma/N)$.
When we have a power-law distribution with $\theta \geq 2 + 1/(\log_2 N)$,
we have $\frac \gamma N =  \frac{2}{M^{1/(\theta-1)}N^{\theta-2}} < \frac 1M$ for $\theta \geq 2 + 1/(\log_2 (N/M))$.
%
%



\pparagraph{Remark on dynamic thresholds}\label{dynamic_thresholds} 
Finally, we
argue that 
level thresholds of the power-law filter can be set dynamically
when the power-law exponent $\theta$ is not known ahead of time. 

Initially, each level on disk has a threshold $0$
(i.e., $\forall i\in{1, \ldots, L}\ \tau_i = 0$). During the first shuffle-merge
involving RAM and the first level on disk, we determine the minimum threshold
for level $1$ ($\tau_1$) required in-order to move at least half of the items
from RAM to the first level on disk.
%
%
When multiple levels, $0, 1, \ldots, i$, are involved in a shuffle-merge,  we
use a bottom-up strategy to assign thresholds. We determine the minimum
threshold required for the bottom most level involved in the shuffle-merge
($\tau_i$) to flush at least half the items from the level just above it
($\tau_{i-1}$). We then apply the same strategy to increment thresholds for
levels $i-1,\ldots, 1$.

This means that the $\tau_i$s for levels $1,\ldots, L$ increase
monotonically. Moreover, during shuffle-merges, we increase thresholds
of levels involved in the shuffle-merge from bottom-up and to the
minimum value so as to not clog the data structure, which means that
the $\tau_i$s take their minimum possible values. Thus, if the $\tau_i$ have
a feasible setting, then this adaptive strategy will find it.

\pparagraph{Summary}  With a power law distribution, we can support a
much lower threshold $\phi$ for the online event-detection
problem.
In the external-member MG sketch from \secref{external-MG-subsec}, the
upper bounds on the counts at each level are implicit. In this
section, we can get better estimates by making these bounds
explicit. Moreover, the data structure can learn these bounds
adaptively.  Thus, the data structure can automatically tailor itself
to the power law exponent without needing to be told the exponent explicitly.

    







\section{Motivating National Security Application}
\seclabel{application}

In this section, we describe the more complex national-security
setting that motivates our constraints.  We describe Firehose~\cite{FirehoseSite,AndersonPl15}, a clean
benchmark that captures the fundamental elements of this setting.  The
\oedp in this paper in turn distills the most difficult part of the
Firehose benchmark.  Therefore our solutions have direct line of sight
to important national-security applications.

\begin{figure}
\centering
\includegraphics[width=5in]{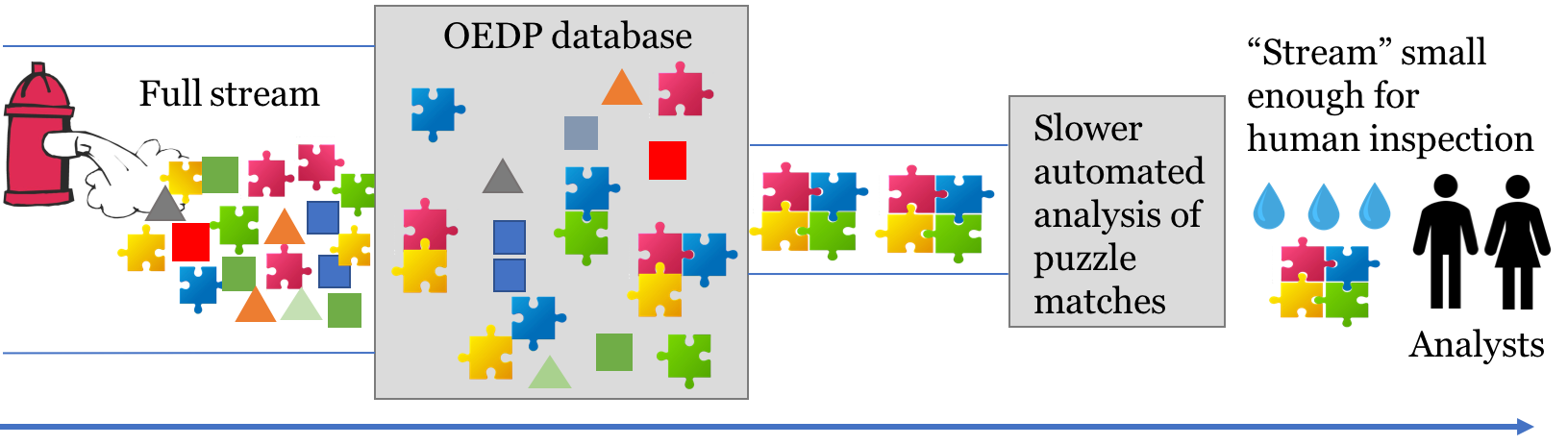}
\caption{The analysis pipeline that motivates our OEDP solution. Analysts associate a multi-piece pattern, represented by the 4-piece
puzzle, to a high-consequence event. The pieces arrive slowly over time, mixed with innocent traffic
in a high-throughput ``firehose'' stream. Our database stores many partial matches to the pattern reporting all instances of the pattern.  
There still may be fair number of matches, which are pared down by an automated system to a small number (essentially droplets
compared to the original stream) of matches worthy of human inspection.
}
\label{fig:pipeline}
\end{figure}

We are motivated by monitoring systems for national security~\cite{FirehoseSite,AndersonPl15},
where experts associate special patterns in a cyberstream to
rare, high-consequence real-life events. These patterns are formed by a small number of ``puzzle pieces,'' as shown in~\figref{pipeline}. Each
piece is associated with a key such as an IP address or a hostname.  The pieces arrive over time. When an entire puzzle associated
with a particular key is complete, this is an event, which should be reported as soon as the final puzzle piece falls into place.
In~\figref{pipeline}, the first stage is like our \oedp algorithm, except that it must store puzzle pieces with each key rather than a count and
the reporting trigger is a complete puzzle, not a count threshold.

There can still be a fair number of matches to this special pattern, most of which are still not the critically bad event.  This might overwhelm a human analyst, who would then not use the system.
However, automated tools, shown in the second stage of~\figref{pipeline}, can pare these down to the few
events worthy of analyst attention.

The first stage filter, like our \oedp solution, must struggle to
handle a massively large, fast stream.  It is reasonable to allow a
few false positives in the first stage to improve its speed. The
second stage can screen out almost all of these false positives as
long as the stream is significantly reduced.  The second stage is a
slower, more careful tool which cannot keep up with the initial
stream.  This second tool cannot, however, repair false negatives
since anything the first filter misses is gone forever. So the first
tool cannot drop any matches to the pattern. Experts have gone to
great effort to find a pattern that is a good filter for the
high-consequence events.  We do not allow false negatives because the
high-consequence events that match this carefully crafted pattern can
and must be detected.

Each of these patterns are small with respect to the stream size, so the detection algorithm must be scalable, that is,
must be able to support a
small $\phi$.  The consequences of missing an event (false negative) are so severe that it is not reasonable to
risk facing those consequences just to save a little space.
Thus we must save all partial patterns, motivating our use of external memory.

The DoD Firehose benchmark captures the essence of this setting~\cite{FirehoseSite}.  In Firehose, the input stream has (key,value) pairs.  When a key is seen for the 24th time, the system must return
a function of the associated 24 values.  The most difficult part of this is determining when the 24th instance of a key arrives.  Thus like Firehose, the \oedp captures the essence of the
motivating application. 

\section{Conclusion}
Our results show that, by enlisting
the power of external memory, we can solve online event detection problems at
a level of precision that is not possible in the streaming model, and
with little or no sacrifice in terms of the timeliness of reports.

Even
though streaming algorithms, such as Misra-Gries, were 
developed for a space-constrained setting, they are nonetheless useful
in external memory, where storage is plentiful but I/Os are expensive.  
%
Furthermore, using external memory for
problems that have traditionally been 
analyzed in the streaming setting enables solutions that can scale
beyond the provable limits of fast RAM

\subsection*{Acknowledgments}

We would like to thank Tyler Mayer for many helpful discussions in earlier
stages of this project.  In~\figref{pipeline}, the full-puzzle icon is from \url{theme4press.com}, the fire-hydrant
icon is from \url{https://hanslodge.com} and the water-drop icon is from \url{stockio.com}.


\bibliographystyle{abbrv}
\raggedright\newcommand{\noopsort}[1]{} \newcommand{\singleletter}[1]{#1}
  \punt{ Uncomment the following lines for short conference/journal names
  @String{SODA = {SODA}} @String{JACM = {Journal of the ACM}} @String{SPAA =
  {SPAA}} @String{PPoPP = {PPoPP}} @String{PLDI = {PLDI}} @String{STOC =
  {STOC}} @String{FOCS = {FOCS}} @String{ESA = {ESA}} @String{ALP = {Colloquium
  on Automata, Languages, and Programming}} @String{SWAT = {SWAT}}
  @String{JALGO = {Journal of Algorithms}} @String{PODC = {PODC}} @String{LNCS
  = {LNCS}} @String{SUPERCOMP = {Supercomputing}} @String{ICCSE = {Israeli
  Conference on Computer Systems Engineering}} @String{CMD = {Conference on
  Management of Data}} }

\end{document}